\newcommand*{\addFileDependency}[1]{
\typeout{(#1)}
%
%
\@addtofilelist{#1}
%
\IfFileExists{#1}{}{\typeout{No file #1.}}
}\makeatother
\newtheorem{teo}{Theorem}[section]
\newtheorem{cor}[teo]{Corollary}
\title{\textbf{Imputation of missing data using multivariate Gaussian Linear Cluster-Weighted Modeling}}
\author{Luis Alejandro Masmela-Caita\footnote{Universidad Distrital Francisco José de Caldas. Bogotá - Colombia. e-mail: lmasmela@udistrital.edu.co}\\
	Thais Paiva Galletti\footnote{Universidade Federal de Minas Gerais. Belo Horizonte - MG - Brasil. e-mail: thaispaiva@est.ufmg.br}\\
	Marcos Oliveira Prates\footnote{Universidade Federal de Minas Gerais. Belo Horizonte - MG - Brasil. e-mail: marcosop@est.ufmg.br}
}
\date{}
\begin{document}

\maketitle

\begin{abstract}

Missing data arises when certain values are not recorded or observed for variables of interest. However, most of the statistical theory assume complete data availability. To address incomplete databases, one approach is to fill the gaps corresponding to the missing information based on specific criteria, known as imputation. In this study, we propose a novel imputation methodology for databases with non-response units by leveraging additional information from fully observed auxiliary variables. We assume that the variables included in the database are continuous and that the auxiliary variables, which are fully observed, help to improve the imputation capacity of the model.
Within a fully Bayesian framework, our method utilizes a flexible mixture of multivariate normal distributions to jointly model the response and auxiliary variables. By employing the principles of Gaussian Cluster-Weighted modeling, we construct a predictive model to impute the missing values by leveraging information from the covariates. We present simulation studies and a real data illustration to demonstrate the imputation capacity of our method across various scenarios, comparing it to other methods in the literature.\\

\textbf{Keywords:} Cluster-Weighted Modeling, Gaussian mixture models, imputation method, missing data.

\end{abstract}

\section{Introduction}
\label{sec:1}

In statistics, the desirable characteristics of estimators are often based on the assumption of complete data. However, in the presence of missing data, these desirable properties may not hold, and special techniques such as imputation may be required to ensure the use of regular statistical methods. \citep{rubin1976inference, rubin1987multiple, little2019statistical}.

Finite mixture models offer a flexible approach to statistical modeling of various random phenomena. \citet{di2007imputation} propose a finite mixture of multivariate Gaussian distributions for imputing missing data utilizing the Expectation-Maximization (EM) algorithm. Two imputation  alternatives are used to compare against their method, the conditional mean imputation and random selection. In particular, the authors investigate its performance in terms of preservation of the mean and the variance-covariance structure of the observed data. \citet{liao2007quadratically} present the Quadratically Gated Mixture of Experts (QGME) for classification of incomplete data, using the EM algorithm for joint likelihood maximization and adaptive imputation in step E. In the context of general regression problems with missing data, \citet{sovilj2016extreme} develop a novel methodology based on the Gaussian Mixture Model and Extreme Learning Machine to provide reliable estimates for the regression function. The Gaussian Mixture Model is used to model the data distribution, while the Extreme Learning Machine designs a multiple imputation strategy for the final estimate.

\citet{paiva2017stop} propose a methodology to impute continuous variables with missing data, where the missing data mechanism is non-ignorable for unit non-response (see \figurename~\ref{subfig:base_1}). Under a Bayesian approach, the procedure begins by fitting a mixture of multivariate normal distributions based on the observed data. Then, using samples from the posterior distribution, an analyst can obtain imputed data in a variety of scenarios. When additional information can be acquired for all individuals from alternative sources, one can opt to include these extra information in the database. This alters the pattern of missing data analyzed by \citet{paiva2017stop} in \figurename~\ref{subfig:base_1} to the pattern shown in \figurename~\ref{subfig:base_2}.

Therefore, a crucial question arises: how to include the information of the new variables in this model in such a way that it improves the imputation process under some established criteria?
\begin{figure}[htb]
\centering
\begin{subfigure}[]{0.40\textwidth}
\includegraphics[width=\textwidth]{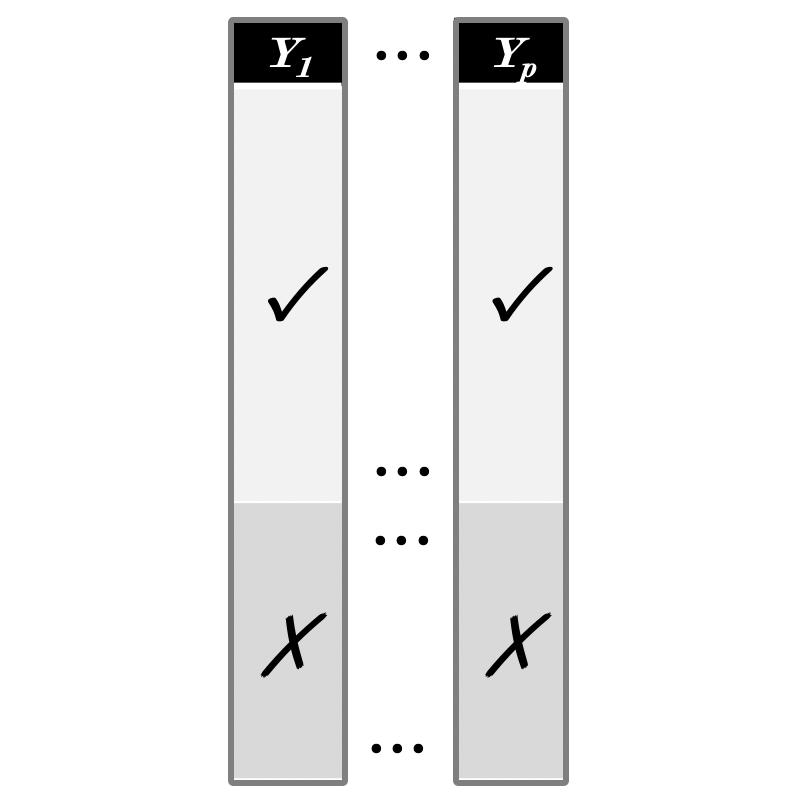}
\caption{No auxiliary information.}
\label{subfig:base_1}
\end{subfigure}
\hfill
\begin{subfigure}[]{0.40\textwidth}
\includegraphics[width=\textwidth]{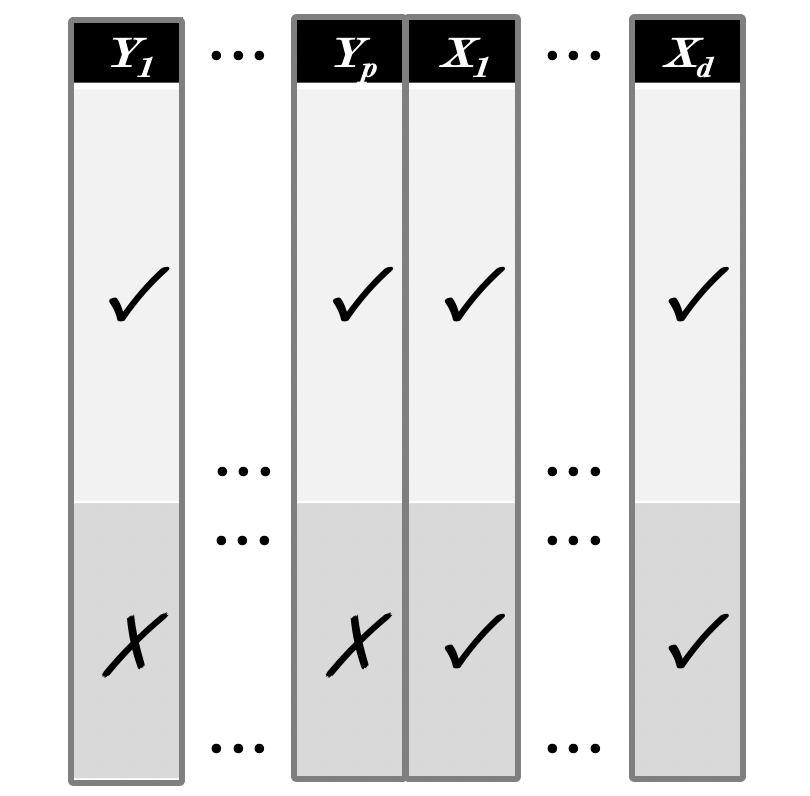}
\caption{With auxiliary information.}
\label{subfig:base_2}
\end{subfigure}
\caption[Missing data pattern from multivariate databases]{Missing data pattern from multivariate databases in the cases of not including and including auxiliary information. We illustrate the missing data patterns from the databases we want to deal with. In them, the check mark ({\footnotesize \Checkmark}) indicates that the data is observed, while the tag ({\footnotesize \XSolidBrush}) indicates that the data is missing.}
\label{fig:bases}
\end{figure}

The implementation of mixture regression models was initially studied in the context that fully observed variables assume the role of covariates. The covariates are treated as deterministic, so they do not carry information about which mixture component the subject is likely to belong to. Although this assumption may be reasonable in experiments where the explanatory variable is completely determined by the experimenter, \citet{hoshikawa2013mixture} states that with observational data the covariates may behave differently between groups. Therefore, the model must also account for the heterogeneity of the covariates, as it enables the estimation of the component to which the subject belongs based on this information. A model that has
these characteristics is the \textit{Cluster-Weighted Modeling} \citep[CWM,][]{gershenfeld1997nonlinear}. 
\citet{ingrassia2012local} proposed to use the CWM in a statistical environment and showed that it is a general and flexible family of mixture models.
In this work, we propose an imputation methodology for handling data with non-response units. Our method relies on additional information from auxiliary variables, which are completely observed and sourced externally. To this end, we employ the Multivariate Gaussian Mixture Model and adopt the Cluster-Weighted Modeling approach  considering that the auxiliary variables can be treated as observational and non-deterministic data, extending the work of \citet{masmela2021ufmg} that used the CWM to impute univariate responses. By assuming that these variables exhibit different behaviors between groups, the model is capable of selecting the appropriate component to carry out the imputation, which is our main interest. For this purpose, we evaluate the model performance using the Kullback-Liebler Divergence. In summary, this measure quantifies the information loss when approximating one distribution with another. Remarkably, we demonstrate that, although our model assume a Missing At Random (MAR) mechanism, under certain conditions, it yields promising imputation results even when data is Missing Not At Random (MNAR). This finding emphasizes the effectiveness and robustness of our approach.

The paper is structured as follows. In Section~\ref{Methodology}, the Gaussian Linear Cluster Weighted is presented in the framework of Gaussian mixture distributions. Some theoretical results are introduced through some theorems that relate various models. The model estimation process is presented from a Bayesian perspective as well as the implementation of the imputation mechanism. Next, simulation studies are presented in Section~\ref{Simulation}. They include the performance of the model when variables of various types are included, as well as comparison with other procedures in the statistical literature. Section~\ref{Ejemplo} is dedicated to presenting the findings obtained from applying the imputation model to actual data sourced from the Iris flower dataset or Fisher's Iris data set. 
Finally, concluding remarks about the method's benefits, limitations, and open problems are provided in Section~\ref{Conclusiones}.

\section{Methodology}
\label{Methodology}

\textit{Cluster-Weighted Modeling} (CWM) is a flexible mixed approach to model the joint probability of data from heterogeneous populations through a weighted sum of the products of marginal distributions and conditional distributions. The CWM was initially introduced by \citet{gershenfeld1997nonlinear} for modeling time series data related to musical instrument parameters. Under the Gaussian assumption, \citet{ingrassia2012local} showed that the Gaussian linear CWM includes \textit{Finite Mixture Models} (FMM) and \textit{Mixtures of Regressions Models} (MRM) as special cases in the case of a univariate response variable that takes values in $\mathbb{R}$. The Gaussian CWM is a mixture of regression models with random covariates that allows for flexible clustering of a random vector composed of response variables and covariates \citep{punzo2017robust}.

A multivariate extension of CWM that can explain the correlations between multivariate responses is presented by \citet{dang2017multivariate}. Suppose the pair $(\bm{X}^\top,\bm{Y}^\top)^\top$ constituted by the random vectors $\bm{X}\in \mathbb{R}^d$ and  $\bm{Y} \in \mathbb{R}^p$, thus the Multivariate Gaussian \textit{Linear Cluster-Weighted Modeling} (LCWM) decomposes the joint probability $p(\bm{x},\bm{y})$ as follows,
\begin{equation}
p(\bm{x},\bm{y})=\sum_{Z=1}^{G}\phi_p(\bm{y};\bm{B}^\top_Z\bm{x}+\bm{b}_{Z,0},\widetilde{\bm{\Sigma}}_Z) \, \phi_d(\bm{x};\bm{\mu}_Z,\bm{\Sigma}_Z) \, \alpha_Z,
\label{lcwmg}
\end{equation}
where $\alpha_Z=\Pr(Z)$ is the mixture weight of cluster $Z$, while $p(\bm{x}|Z)=\phi_d(\bm{x};\bm{\mu}_Z,\bm{\Sigma}_Z)$ and $p(\bm{y}|\bm{x},Z)=\phi_p(\bm{y};\bm{B}^\top_Z\bm{x}+\bm{b}_{Z,0},\widetilde{\bm{\Sigma}}_Z)$ correspond to the marginal and conditional distributions respectively. The notation $\phi_d$ (and $\phi_p$) represents the density of a $d$-variate (and $p$-variate) Gaussian random vector. Likewise, $\bm{\mu}_Z$ denotes the mean vector, $\bm{\Sigma}_Z$ and $\widetilde{\bm{\Sigma}}_Z$ are variance-covariance matrices, $\bm{B}_Z \in \mathbb{R}^{d \times p}$ is a matrix of coefficients, and $\bm{b}_{Z,0} \in \mathbb{R}^{p}$.

Two essential posterior distributions, utilized primarily for classification purposes, are defined through conditional probabilities, namely $p(Z|\bm{x},\bm{y})$ and $p(Z|\bm{x})$. The former represents the probability that the observation $(\bm{x},\bm{y})$ belongs to group $Z$ and is defined as,
\begin{equation}
p(Z|\bm{x},\bm{y})=\frac{\phi_p(\bm{y};B^\top_Z\bm{x}+\bm{b}_{Z,0},\widetilde{\bm{\Sigma}}_Z) \, \phi_d(\bm{x};\bm{\mu}_Z,\bm{\Sigma}_Z) \, \alpha_Z}{\sum_{Z=1}^{G}\phi_p(\bm{y};B^\top_Z\bm{x}+\bm{b}_{Z,0},\widetilde{\bm{\Sigma}}_Z) \, \phi_d(\bm{x};\bm{\mu}_Z,\bm{\Sigma}_Z) \, \alpha_Z}.
\label{pesos_LCWMG}
\end{equation}
The latter, used to determine the component responsible for imputing information related to the input vector $\bm{x}$, is defined as,
\begin{equation}
p(Z|\bm{x})=\frac{\phi_d(\bm{x};\bm{\mu}_Z,\bm{\Sigma}_Z)\alpha_Z}{\sum_{Z=1}^{G}\phi_d(\bm{x};\bm{\mu}_Z,\bm{\Sigma}_Z)\alpha_Z}.
\label{pesos_LCWMG_x}
\end{equation} 
Henceforth, we will refer to the distributions in (\ref{pesos_LCWMG}) and (\ref{pesos_LCWMG_x}) as  \textit{responsibilities}.

\subsection{Relationships between FMM, MRM, and LCWM under Gaussian assumptions}
\label{subsec:rel_models}

In this section, we introduce three crucial results regarding the probability functions and responsibilities that establish a connection between FMM, MRM, and LCWM in the multivariate Gaussian setting, and allow us to implement our imputation proposal. The first result is the most important to our method, while the two additional ones enrich our comprehension of the interrelationships among these models, for more details see \citet{masmela2021ufmg}.

Let $\bm{W}=(\bm{X}^\top,\bm{Y}^\top)^\top$ be a random vector that takes values in $\mathbb{R}^{d+p}$, where $\bm{X}\in \mathbb{R}^{d}$ and $\bm{Y}\in \mathbb{R}^{p}$. Assume that the density $p(\bm{w})$ corresponds to a Gaussian FMM, that is,
\begin{equation}
	p(\bm{w})=\sum_{Z=1}^{G} \phi_{d+p}\left(\bm{w};\bm{\mu}_Z^{(\bm{w})},\bm{\Sigma}_Z^{(\bm{w})}\right)\alpha_Z,
	\label{fmm_M}
	\end{equation}
where $\bm{\mu}^{(\bm{w})}_Z$ and $\bm{\Sigma}^{(\bm{w})}_Z$ are the mean vector and the variance-covariance matrix of $\bm{W}|Z$, respectively. In (\ref{fmm_M}), $\alpha_Z=\Pr(Z)$ is the mixture weight of the components marked with $Z \in \{1,\ldots, G\}$ and
\begin{equation*}
\bm{\mu}_Z^{(\bm{w})}=
\begin{pmatrix}
\bm{\mu}_Z^{(\bm{x})}\\
\bm{\mu}_Z^{(\bm{y})}
\end{pmatrix}
\quad \text{ and } \quad \bm{\Sigma}_Z^{(\bm{w})}=
\begin{pmatrix}
\bm{\Sigma}_Z^{(\bm{xx})} & \bm{\Sigma}_Z^{(\bm{xy})} \\
\bm{\Sigma}_Z^{(\bm{yx})} & \bm{\Sigma}_Z^{(\bm{yy})} 
\end{pmatrix}.
\end{equation*}

A first interesting result indicates that the CWM contains the FMM as a special case and, specifically, restricting the CWM to the the Gaussian context, FMM and LCWM are equivalent.

\begin{teo}
\label{PropM_1}
	Assume that $\bm{W}=(\bm{X}^\top,\bm{Y}^\top)^\top$ is a random vector that takes values in a subset of $\mathbb{R}^{d+p}$, and suppose that $\bm{W}|Z \sim \mathcal{N}_{d+p}\left(\bm{\mu}_Z^{(\bm{w})},\bm{\Sigma}_Z^{(\bm{w})}\right)$ with $Z \in \{1,\ldots , G\}$. In particular, the density $p(\bm{w})$ of $\bm{W}$ is the Gaussian FMM:
	\begin{equation*}
	p(\bm{w})=\sum_{Z=1}^{G} \phi_{d+p}\left(\bm{w};\bm{\mu}_Z^{(\bm{w})},\bm{\Sigma}_Z^{(\bm{w})}\right)\alpha_Z.
	\end{equation*}
	So, $p(\bm{w})$ can be written similarly to 
	\begin{equation*}
	p(\bm{x},\bm{y})=\sum_{Z=1}^{G}\phi_p(\bm{y};B^\top_Z\bm{x}+\bm{b}_{Z,0},\widetilde{\bm{\Sigma}}_Z)\phi_d(\bm{x};\bm{\mu}_Z,\bm{\Sigma}_Z)\alpha_Z,
	\end{equation*}		
	that is, a Gaussian LCWM.
\end{teo}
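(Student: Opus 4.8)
The plan is to exploit the standard factorization of a multivariate Gaussian density into a marginal times a conditional, applied component by component. Since $\bm{W}\mid Z \sim \mathcal{N}_{d+p}(\bm{\mu}_Z^{(\bm{w})},\bm{\Sigma}_Z^{(\bm{w})})$, the joint density within cluster $Z$ factors exactly as $\phi_{d+p}(\bm{w};\bm{\mu}_Z^{(\bm{w})},\bm{\Sigma}_Z^{(\bm{w})}) = \phi_d(\bm{x};\bm{\mu}_Z^{(\bm{x})},\bm{\Sigma}_Z^{(\bm{xx})})\,\phi_p(\bm{y}\mid\bm{x};\cdot,\cdot)$, where the conditional $\bm{Y}\mid\bm{X}=\bm{x},Z$ is again Gaussian. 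First I would recall the well-known formulas for the conditional of a jointly Gaussian vector: the conditional mean is $\bm{\mu}_Z^{(\bm{y})}+\bm{\Sigma}_Z^{(\bm{yx})}(\bm{\Sigma}_Z^{(\bm{xx})})^{-1}(\bm{x}-\bm{\mu}_Z^{(\bm{x})})$ and the conditional covariance is the Schur complement $\bm{\Sigma}_Z^{(\bm{yy})}-\bm{\Sigma}_Z^{(\bm{yx})}(\bm{\Sigma}_Z^{(\bm{xx})})^{-1}\bm{\Sigma}_Z^{(\bm{xy})}$.

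Next, I would match these expressions term by term to the LCWM parametrization in (\ref{lcwmg}). The conditional mean is affine in $\bm{x}$, so setting
\begin{equation*}
\bm{B}_Z^\top = \bm{\Sigma}_Z^{(\bm{yx})}\bigl(\bm{\Sigma}_Z^{(\bm{xx})}\bigr)^{-1},\qquad
\bm{b}_{Z,0} = \bm{\mu}_Z^{(\bm{y})} - \bm{\Sigma}_Z^{(\bm{yx})}\bigl(\bm{\Sigma}_Z^{(\bm{xx})}\bigr)^{-1}\bm{\mu}_Z^{(\bm{x})},\qquad
\widetilde{\bm{\Sigma}}_Z = \bm{\Sigma}_Z^{(\bm{yy})} - \bm{\Sigma}_Z^{(\bm{yx})}\bigl(\bm{\Sigma}_Z^{(\bm{xx})}\bigr)^{-1}\bm{\Sigma}_Z^{(\bm{xy})},
\end{equation*}
and $\bm{\mu}_Z=\bm{\mu}_Z^{(\bm{x})}$, $\bm{\Sigma}_Z=\bm{\Sigma}_Z^{(\bm{xx})}$, yields exactly $\phi_{d+p}(\bm{w};\bm{\mu}_Z^{(\bm{w})},\bm{\Sigma}_Z^{(\bm{w})}) = \phi_p(\bm{y};\bm{B}_Z^\top\bm{x}+\bm{b}_{Z,0},\widetilde{\bm{\Sigma}}_Z)\,\phi_d(\bm{x};\bm{\mu}_Z,\bm{\Sigma}_Z)$ for each $Z$. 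Finally, I would multiply each identity by $\alpha_Z$ and sum over $Z=1,\dots,G$; since the mixture weights are untouched, the FMM density in the statement becomes the LCWM density, completing the argument.

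The only technical point worth flagging is the invertibility of the marginal blocks $\bm{\Sigma}_Z^{(\bm{xx})}$; this is guaranteed because $\bm{\Sigma}_Z^{(\bm{w})}$ is a genuine (positive-definite) covariance matrix, so every principal submatrix, in particular $\bm{\Sigma}_Z^{(\bm{xx})}$, is positive definite and hence invertible, and the Schur complement $\widetilde{\bm{\Sigma}}_Z$ is likewise positive definite. I would state this briefly rather than belabor it. I do not anticipate a genuine obstacle here: the result is essentially a repackaging of the Gaussian conditioning identity, so the ``hard part'' is merely bookkeeping—keeping the transpose conventions for $\bm{B}_Z$ consistent with the $\mathbb{R}^{d\times p}$ convention in (\ref{lcwmg}) and being explicit that the decomposition holds for an arbitrary but fixed $Z$ before summing.
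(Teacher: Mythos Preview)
Your proposal is correct and matches the paper's own proof essentially step for step: the paper also invokes the marginal--conditional factorization of each Gaussian component, writes out the conditional mean as $\bm{\Sigma}_Z^{(\bm{yx})}(\bm{\Sigma}_Z^{(\bm{xx})})^{-1}\bm{x} + \bigl[\bm{\mu}_Z^{(\bm{y})}-\bm{\Sigma}_Z^{(\bm{yx})}(\bm{\Sigma}_Z^{(\bm{xx})})^{-1}\bm{\mu}_Z^{(\bm{x})}\bigr]$ and the Schur complement for $\widetilde{\bm{\Sigma}}_Z$, and then identifies these with $\bm{B}_Z^\top\bm{x}+\bm{b}_{Z,0}$ and $\widetilde{\bm{\Sigma}}_Z$. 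Your extra remark on the invertibility of $\bm{\Sigma}_Z^{(\bm{xx})}$ via positive definiteness is a welcome clarification that the paper leaves implicit.
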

	
\begin{proof}[\textbf{\upshape Proof:}]
	Let us set $\bm{W}=(\bm{X}^\top,\bm{Y}^\top)^\top$, where $\bm{X}$ is a $d-$dimensional random vector and $\bm{Y}$ is a $p-$dimensional random vector. Using properties of the multivariate normal distribution, 
		\begin{equation*}
\begin{split}
p(\bm{w})=\sum_{Z=1}^{G}\phi_{d+p}\left((\bm{x},\bm{y});\bm{\mu}_Z^{(\bm{w})},\bm{\Sigma}_Z^{(\bm{w})}\right)\alpha_Z 
=\sum_{Z=1}^{G}\phi_p\left(\bm{y};\bm{\mu}_Z^{(\bm{y|x})},\bm{\Sigma}_Z^{(\bm{y|x})}\right)\phi_d\left(\bm{x};\bm{\mu}_Z^{(\bm{x})},\bm{\Sigma}_Z^{(\bm{x})}\right)\alpha_Z,
\end{split}
\end{equation*}
where,
\begin{equation}
	\begin{split}
	\bm{\mu}_Z^{(\bm{y|x})}&=\bm{\mu}_Z^{(\bm{y})}+\bm{\Sigma}_Z^{(\bm{yx})}{\bm{\Sigma}_Z^{(\bm{xx})}}^{-1}\left(\bm{x}-\bm{\mu}_Z^{(\bm{x})}\right)\\
	&=\left[ \bm{\Sigma}_Z^{(\bm{yx})}{\bm{\Sigma}_Z^{(\bm{xx})}}^{-1}\right] \bm{x}+\left[\bm{\mu}_Z^{(\bm{y})}-\bm{\Sigma}_Z^{(\bm{yx})}{\bm{\Sigma}_Z^{(\bm{xx})}}^{-1}\bm{\mu}_Z^{(\bm{x})} \right]\\
	&=B^\top_Z\bm{x}+\bm{b}_{Z,0},
	\end{split}
	\label{eq:mean_LCWM}
\end{equation}	
and,		
\begin{equation}
	\begin{split}
	\bm{\Sigma}_Z^{(\bm{y|x})}=\bm{\Sigma}_Z^{(\bm{yy})}-\bm{\Sigma}_Z^{(\bm{yx})}{\bm{\Sigma}_Z^{(\bm{xx})}}^{-1}\bm{\Sigma}_Z^{(\bm{xy})}
	=\widetilde{\bm{\Sigma}}_Z.
	\end{split}
	\label{eq:var_LCWM}
\end{equation}	
Then, Equation~\eqref{fmm_M} can be written as Equation~\eqref{lcwmg}.
\end{proof}

The proof of Theorem~\ref{PropM_1} enables us to conclude that once the parameters of the FMM are estimated, these estimates can be utilized to derive the parameters of the Gaussian LCWM. This obviates the need to establish a separate procedure for directly estimating the parameters of the Gaussian LCWM. Instead, we leverage the procedure established by \citet{kim2014multiple} and \citet{paiva2017stop} for estimating the parameters of the Gaussian FMM, along with the corresponding mappings to the expressions in (\ref{eq:mean_LCWM}) and (\ref{eq:var_LCWM}). The following result establishes the relationship between the LCWM and the MRM when the covariate $\bm{x}$ has the same distribution across all components.

\begin{teo}
\label{PropM_2}
	Consider the Gaussian LCWM given in (\ref{lcwmg}), with $\bm{X}|Z \sim \mathcal{N}_d(\bm{\mu}_Z,\bm{\Sigma}_Z)$ and $Z \in \{1,\ldots, G\}$. If the probability density of $\bm{X}|Z$ does not depend on the component, i.e., $\phi_d(\bm{x};\bm{\mu}_Z,\bm{\Sigma}_Z)=\phi_d(\bm{x};\bm{\mu},\Sigma)$ for all $Z \in \{1,\ldots, G\}$, then it follows that
	\begin{equation*}
	p(\bm{x},\bm{y})=\phi_d(\bm{x};\bm{\mu},\Sigma)p(\bm{y}|\bm{x}),
	\end{equation*}
	where $p(\bm{y}|\bm{x})$ is the Gaussian MRM given by the expression
	\begin{equation}
	p(\bm{y}|\bm{x})=\sum_{Z=1}^{G}\phi_p(\bm{y};\bm{B}^\top_Z\bm{x}+\bm{b}_{Z,0},\widetilde{\bm{\Sigma}}_Z)\alpha_Z.
	\label{GLMRM_M}
	\end{equation}
\end{teo}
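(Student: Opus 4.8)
The plan is to obtain the stated product decomposition by a direct factorization of the joint density, followed by a short normalization check to identify the resulting factor as the MRM conditional. First I would start from the LCWM representation in \eqref{lcwmg},
\begin{equation*}
p(\bm{x},\bm{y})=\sum_{Z=1}^{G}\phi_p(\bm{y};\bm{B}^\top_Z\bm{x}+\bm{b}_{Z,0},\widetilde{\bm{\Sigma}}_Z)\,\phi_d(\bm{x};\bm{\mu}_Z,\bm{\Sigma}_Z)\,\alpha_Z,
\end{equation*}
and substitute the hypothesis $\phi_d(\bm{x};\bm{\mu}_Z,\bm{\Sigma}_Z)=\phi_d(\bm{x};\bm{\mu},\Sigma)$ for every $Z\in\{1,\ldots,G\}$. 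Since the common factor $\phi_d(\bm{x};\bm{\mu},\Sigma)$ no longer depends on the summation index, it can be pulled out of the sum, giving
\begin{equation*}
p(\bm{x},\bm{y})=\phi_d(\bm{x};\bm{\mu},\Sigma)\sum_{Z=1}^{G}\phi_p(\bm{y};\bm{B}^\top_Z\bm{x}+\bm{b}_{Z,0},\widetilde{\bm{\Sigma}}_Z)\,\alpha_Z,
\end{equation*}
which already exhibits the claimed product structure, with the bracketed sum playing the role of $p(\bm{y}|\bm{x})$.

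Next I would verify that this bracketed sum is genuinely the conditional density of $\bm{Y}$ given $\bm{X}=\bm{x}$ and that it has the MRM form asserted in \eqref{GLMRM_M}. To this end I would compute the $\bm{x}$-marginal by integrating the displayed joint density over $\bm{y}$: because each $\phi_p(\bm{y};\cdot,\cdot)$ integrates to one and $\sum_{Z=1}^{G}\alpha_Z=1$, one obtains $p(\bm{x})=\phi_d(\bm{x};\bm{\mu},\Sigma)$. Dividing, $p(\bm{y}|\bm{x})=p(\bm{x},\bm{y})/p(\bm{x})=\sum_{Z=1}^{G}\phi_p(\bm{y};\bm{B}^\top_Z\bm{x}+\bm{b}_{Z,0},\widetilde{\bm{\Sigma}}_Z)\,\alpha_Z$, which is exactly \eqref{GLMRM_M}. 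Finally I would remark that this is, by definition, a Gaussian MRM: the mixing proportions $\alpha_Z$ are constant in $\bm{x}$, and each component is the linear-Gaussian regression $\bm{Y}\mid\bm{x},Z\sim\mathcal{N}_p(\bm{B}^\top_Z\bm{x}+\bm{b}_{Z,0},\widetilde{\bm{\Sigma}}_Z)$.

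There is essentially no deep obstacle here; the argument is a one-line factorization plus a routine normalization check. The only point deserving care is the bookkeeping: making explicit that the hypothesis removes the dependence of the Gaussian covariate factor on $Z$ (so it leaves the sum), and confirming via marginalization that the leftover sum is a bona fide density in $\bm{y}$ rather than a merely formal expression. If one wished to stress the ``special case'' reading, one could additionally observe that the responsibilities in \eqref{pesos_LCWMG} simplify, since the $\phi_d$ factors cancel between numerator and denominator, but this is not needed for the statement as posed.
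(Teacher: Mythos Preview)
Your proposal is correct and follows essentially the same approach as the paper: substitute the hypothesis into \eqref{lcwmg}, pull the common $\phi_d(\bm{x};\bm{\mu},\Sigma)$ out of the sum, and identify the remaining sum as the Gaussian MRM in \eqref{GLMRM_M}. The only difference is that you add an explicit normalization check (integrating out $\bm{y}$ to confirm $p(\bm{x})=\phi_d(\bm{x};\bm{\mu},\Sigma)$ and then dividing), whereas the paper simply labels the remaining sum $p(\bm{y}|\bm{x})$ without that extra verification.
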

\begin{proof}[\textbf{\upshape Proof:}]
	Assume that $\phi_d(\bm{x};\bm{\mu}_Z,\bm{\Sigma}_Z)=\phi_d(\bm{x};\bm{\mu},\Sigma)$ for all $Z \in \{1,. . . , G\}$, then from the expression in (\ref{lcwmg})
	\begin{equation*}
	\begin{split}
	p(\bm{x},\bm{y})&=\sum_{Z=1}^{G}\phi_p(\bm{y};\bm{B}^\top_Z\bm{x}+\bm{b}_{Z,0},\widetilde{\bm{\Sigma}}_Z)\phi_d(\bm{x};\bm{\mu}_Z,\bm{\Sigma}_Z)\alpha_Z\\
	&=\phi_d(\bm{x};\bm{\mu},\Sigma)\sum_{Z=1}^{G}\phi_p(\bm{y};\bm{B}^\top_Z\bm{x}+\bm{b}_{Z,0},\widetilde{\bm{\Sigma}}_Z)\alpha_Z\\
	&=\phi_d(\bm{x};\bm{\mu},\Sigma)p(\bm{y}|\bm{x})
	\end{split}
	\end{equation*}	 
	where $p(\bm{y}|\bm{x})$ is the Gaussian MRM given in Equation~\eqref{GLMRM_M}.
\end{proof}

A significant outcome derived from the same hypotheses assumed in Theorem \ref{PropM_2}, is presented in Corollary~\ref{CorM_1}. This result provides the expression for the responsibilities that are adopted by our imputation method.

\begin{cor}
\label{CorM_1}
If the probability density of $\bm{X}|Z$ does not depend on the component, that is, $\phi_d(\bm{x};\bm{\mu}_Z,\bm{\Sigma}_Z)=\phi_d(\bm{x};\bm{\mu},\Sigma)$ for all $Z \in \{1,\ldots, G\}$, then the responsibility given by 
	\begin{equation}
p(Z|\bm{x},\bm{y})=\frac{\phi_p(\bm{y};\bm{B}^\top_Z\bm{x}+\bm{b}_{Z,0},\widetilde{\bm{\Sigma}}_Z)\phi_d(\bm{x};\bm{\mu}_Z,\bm{\Sigma}_Z)\alpha_Z}{\sum_{Z=1}^{G}\phi_p(\bm{y};\bm{B}^\top_Z\bm{x}+\bm{b}_{Z,0},\widetilde{\bm{\Sigma}}_Z)\phi_d(\bm{x};\bm{\mu}_Z,\bm{\Sigma}_Z)\alpha_Z},
	\label{pesos_LCWMG_M}
	\end{equation}
coincides with
	\begin{equation}
	p(Z|\bm{x},\bm{y})=\frac{\phi_p(\bm{y};\bm{B}^\top_Z\bm{x}+\bm{b}_{Z,0},\widetilde{\bm{\Sigma}}_Z)\alpha_Z}{\sum_{Z=1}^{G}\phi_p(\bm{y};\bm{B}^\top_Z\bm{x}+\bm{b}_{Z,0},\widetilde{\bm{\Sigma}}_Z)\alpha_Z}.
	\label{pesos_LMRM_M}
	\end{equation} 
\end{cor}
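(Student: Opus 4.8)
The plan is to start from the expression for the responsibility in \eqref{pesos_LCWMG_M} and exploit the hypothesis that the covariate density is constant across components. Since $\phi_d(\bm{x};\bm{\mu}_Z,\bm{\Sigma}_Z)=\phi_d(\bm{x};\bm{\mu},\Sigma)$ for every $Z \in \{1,\ldots,G\}$, this factor does not depend on the summation index $Z$ and therefore appears identically in every term of the numerator and in every term of the denominator sum.

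\begin{proof}[\textbf{\upshape Proof:}]
Assume $\phi_d(\bm{x};\bm{\mu}_Z,\bm{\Sigma}_Z)=\phi_d(\bm{x};\bm{\mu},\Sigma)$ for all $Z \in \{1,\ldots,G\}$. Substituting this into the right-hand side of \eqref{pesos_LCWMG_M},
\begin{equation*}
\begin{split}
p(Z|\bm{x},\bm{y})
&=\frac{\phi_p(\bm{y};\bm{B}^\top_Z\bm{x}+\bm{b}_{Z,0},\widetilde{\bm{\Sigma}}_Z)\,\phi_d(\bm{x};\bm{\mu},\Sigma)\,\alpha_Z}{\sum_{Z=1}^{G}\phi_p(\bm{y};\bm{B}^\top_Z\bm{x}+\bm{b}_{Z,0},\widetilde{\bm{\Sigma}}_Z)\,\phi_d(\bm{x};\bm{\mu},\Sigma)\,\alpha_Z}\\
&=\frac{\phi_d(\bm{x};\bm{\mu},\Sigma)\,\phi_p(\bm{y};\bm{B}^\top_Z\bm{x}+\bm{b}_{Z,0},\widetilde{\bm{\Sigma}}_Z)\,\alpha_Z}{\phi_d(\bm{x};\bm{\mu},\Sigma)\sum_{Z=1}^{G}\phi_p(\bm{y};\bm{B}^\top_Z\bm{x}+\bm{b}_{Z,0},\widetilde{\bm{\Sigma}}_Z)\,\alpha_Z}\\
&=\frac{\phi_p(\bm{y};\bm{B}^\top_Z\bm{x}+\bm{b}_{Z,0},\widetilde{\bm{\Sigma}}_Z)\,\alpha_Z}{\sum_{Z=1}^{G}\phi_p(\bm{y};\bm{B}^\top_Z\bm{x}+\bm{b}_{Z,0},\widetilde{\bm{\Sigma}}_Z)\,\alpha_Z},
\end{split}
\end{equation*}
where in the last step the common nonzero factor $\phi_d(\bm{x};\bm{\mu},\Sigma)$ cancels between numerator and denominator, since $\phi_d(\bm{x};\bm{\mu},\Sigma)>0$ for every $\bm{x}$ in the support. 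This is exactly \eqref{pesos_LMRM_M}, which completes the proof.
\end{proof}

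The argument is essentially a one-line cancellation, so there is no serious obstacle; the only point requiring a word of care is the tacit assumption that $\phi_d(\bm{x};\bm{\mu},\Sigma)\neq 0$, which holds because a multivariate Gaussian density is strictly positive on $\mathbb{R}^d$. One could alternatively derive the identity directly from Theorem~\ref{PropM_2}: writing $p(Z|\bm{x},\bm{y})=p(\bm{x},\bm{y},Z)/p(\bm{x},\bm{y})$, the factorization $p(\bm{x},\bm{y})=\phi_d(\bm{x};\bm{\mu},\Sigma)p(\bm{y}|\bm{x})$ and the analogous factorization of the joint with $Z$ make the $\phi_d$ terms cancel, yielding $p(Z|\bm{x},\bm{y})=p(Z|\bm{y},\bm{x})$ expressed through the MRM weights in \eqref{GLMRM_M}. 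Either route reduces to the same elementary simplification.
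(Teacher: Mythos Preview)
Your proof is correct and follows essentially the same route as the paper's: substitute the common covariate density $\phi_d(\bm{x};\bm{\mu},\Sigma)$ into \eqref{pesos_LCWMG_M}, factor it out of the denominator sum, and cancel. Your added remark that the Gaussian density is strictly positive (so the cancellation is legitimate) and the alternative derivation via Theorem~\ref{PropM_2} are nice touches but not present in the paper's version.
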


\begin{proof}[\textbf{\upshape Proof:}]
Assume that $\phi_d(\bm{x};\bm{\mu}_Z,\bm{\Sigma}_Z)=\phi_d(\bm{x};\bm{\mu},\Sigma)$ for all $Z \in \{1,\ldots,G\}$, from the expression in (\ref{pesos_LCWMG_M}) we get
\begin{equation*}
\begin{split}
p(Z|\bm{x},\bm{y})&=\frac{\phi_p(\bm{y};\bm{B}^\top_Z\bm{x}+\bm{b}_{Z,0},\widetilde{\bm{\Sigma}}_Z)\phi_d(\bm{x};\bm{\mu}_Z,\bm{\Sigma}_Z)\alpha_Z}{\sum_{Z=1}^{G}\phi_p(\bm{y};\bm{B}^\top_Z\bm{x}+\bm{b}_{Z,0},\widetilde{\bm{\Sigma}}_Z)\phi_d(\bm{x};\bm{\mu}_Z,\bm{\Sigma}_Z)\alpha_Z}\\
&=\frac{\phi_d(\bm{x};\bm{\mu},\Sigma)\phi_p(\bm{y};\bm{B}^\top_Z\bm{x}+\bm{b}_{Z,0},\widetilde{\bm{\Sigma}}_Z)\alpha_Z}{\phi_d(\bm{x};\bm{\mu},\Sigma)\sum_{Z=1}^{G}\phi_p(\bm{y};\bm{B}^\top_Z\bm{x}+\bm{b}_{Z,0},\widetilde{\bm{\Sigma}}_Z)\alpha_Z}\\
&=\frac{\phi_p(\bm{y};\bm{B}^\top_Z\bm{x}+\bm{b}_{Z,0},\widetilde{\bm{\Sigma}}_Z)\alpha_Z}{\sum_{Z=1}^{G}\phi_p(\bm{y};\bm{B}^\top_Z\bm{x}+\bm{b}_{Z,0},\widetilde{\bm{\Sigma}}_Z)\alpha_Z},
\end{split}
\end{equation*}
for $Z \in \{1,\ldots,G\}$.
\end{proof}

Equation~\eqref{pesos_LMRM_M} allows us to infer that, under the assumption that the distribution of $\bm{X}|Z$ does not depend on cluster, the classification of each observation $(\bm{x},\bm{y})$ is determined by the conditional distribution and the mixing weight. These responsibilities coincide with those defined for the MRM \citep{desarbo1988maximum,mclachlan2004finite,fruhwirth2006finite}. Additionally, under the same conditions established by the Corollary \ref{CorM_1}, we have that the responsibility in Equation~\eqref{pesos_LCWMG_x} simplifies to $p(Z|\bm{x})=\alpha_Z$.

Based on the presented results, we can conclude that have two scenarios. The first scenario, when the distribution of the input vector $\bm{X}|Z$ varies between components, it allows us to utilize the responsibilities to determine the appropriate component for imputation. This indicates that incorporating the vector of variables $\bm{X}$ should improve the imputation process. Conversely, in the second scenario, if the distribution of the vector $\bm{X}|Z$ does not depend on the component, it suggests that the information provided by variable $\bm{X}$ is uninformative for the imputation process. In such cases, the LCWM or MRM would provide similar outcomes.

\subsection{Bayesian estimation of the model} 
\label{subsec:Bayesian_est}

A mapping between the parameter vectors of the FMM and the LCWM was presented in  Theorem~\ref{PropM_1}. Therefore, within the presented methodology we are initially interested in estimating the parameters of the Gaussian FMM given in expression \eqref{fmm_M}. For this purpose, we implement a Bayesian approach to estimate the multivariate Gaussian mixture model using a stick-breaking representation of a truncated Dirichlet process as the prior distribution of the mixture weights \citep{ferguson1973bayesian, sethuraman1994constructive}.

The full model is constructed as follow. Suppose that each individual in the data set belongs to one of $G$ mixture components, that is, $Z_i \in \{1,\ldots,G \}$ so that $\bm{Z}^\top=(Z_1,\ldots,Z_n)^\top$. The mixture weights are given by $\bm{\alpha}^\top=(\alpha_1,\ldots,\alpha_G)^\top$ with $\alpha_g = \Pr(Z_i=g)$ where $i \in \{1.\ldots,n \}$ and $g \in \{1,\ldots,G\}$. If $\bm \mu= (\bm \mu_1,\ldots,\bm \mu_G)$ and $\bm\Sigma=(\bm\Sigma_1,\ldots,\bm\Sigma_G)$ then,
\begin{equation*}
\begin{split}
\textbf{\textit{W}}_i|Z_i,\bm\mu,\bm\Sigma &\sim \mathcal{N}_{d+p}(\bm \mu_{Z_i},\Sigma_{Z_i}),\\
Z_i|\bm \alpha &\sim \text{Multinomial}(\bm \alpha). 
\end{split}
\end{equation*}

In other mixed model applications for multiple imputation of missing data  \citep{si2013nonparametric,manrique2017bayesian}, it has been shown that a stick-breaking representation of a truncated Dirichlet process was successful and allows for a fast convergence of the MCMC methods \citep{kim2014multiple}. Therefore, using this representation, we have that
\begin{equation*}
\begin{split}
\alpha_g &= \nu_g \prod_{k<g} (1-\nu_k) \text{ for } g \in \{1,\ldots,G\},\\
\nu_g &\sim \text{Beta}(1,\eta) \text{ for } g \in \{1,\ldots,G-1\}; \quad \nu_G=1,\\
\eta &\sim \text{Gamma}(a_{\eta},b_{\eta}).
\end{split}
\end{equation*}
Next, for $g \in \{1,\ldots,G\}$, we specify the prior distributions for the parameters $(\bm\mu,\bm\Sigma)$ 
\begin{equation*}
\begin{split}
\bm{\mu}_g|\Sigma_g &\sim \mathcal{N}_{d+p}(\bm \mu_0,h^{-1}\Sigma_g),\\
\Sigma_g &\sim \text{Inverse Wishart}(f,\bm{\Delta}),
\end{split}
\end{equation*}
where $f$ is the prior degrees of freedom and $\bm{\Delta}=\text{diag}(\delta_1,\ldots,\delta_p)$ with $\delta_j \sim \text{Gamma}(a_{\delta},b_{\delta})$ for $j \in \{1,\ldots,p\}$. 

In the next section, we present the full conditionals and the Gibbs sampler algorithm to estimate the posterior distribution and to implement the proposed imputation method. It is important to notice that, for the number of components $G$, \citet{kim2015simultaneous} recommends fixing a large value initially. At each iteration of the Gibbs sampler, the number of nonempty components is counted. If this count reaches the value assigned to $G$, it is prudent to increase $G$ and readjust the model with more components. When the count of nonempty components is less than $G$, then the selected value of $G$ is reasonable.

\subsection{Imputation procedure}
\label{subsec: imp_proced}

Our goal is to establish an imputation procedure that takes advantage of extra information available for all individuals (see \figurename~\ref{subfig:base_2}). If these new variables demonstrate a strong correlation with the variables containing missing data, this supplementary information can significantly influence the adequacy of the distribution mixture fit, thus affecting the imputation procedure. Specifically, the new information enhances the ability to determine the suitable component for imputation. By incorporating this information, we can make the assumption that the missing data mechanism is MAR \citep[see, e.g.,][]{little2019statistical, zhang2003multiple}.
Assuming that the input data are observational, we jointly use input and output variables to fit a Gaussian FMM (see Section~\ref{subsec:Bayesian_est}). Based on the obtained estimates and utilizing the expressions in \eqref{eq:mean_LCWM} and \eqref{eq:var_LCWM}, we derive parameter estimates for the conditional model in the Gaussian LCWM. These estimates play a crucial role in establishing the imputation steps. The values of the input variables for individuals with missing information will be used in combination with the responsibilities in an adaptive manner to determine the appropriate component for imputation.

To implement the Gibbs sampler algorithm, it is necessary to calculate the conditional posterior distributions for each parameter in our model. This involves considering the functional form of the priors distributions, the information provided by the observed data, and the current values of the model parameters. The updates are performed based on the initial values assigned to parameters, hyperparameters, and missing data, as follows. For $i \in \{1,\ldots,n \}$, the full conditional for $\bm{Z}_i$ is:
\begin{equation}
\bm{Z}_i|\bm{\alpha},\bm{\mu},\bm{\Sigma},\bm{W}_i \sim \text{Multinomial}(\alpha^*_{i,1},\ldots,\alpha^*_{i ,G})
\end{equation}
where,
\begin{equation*}
\alpha^*_{i,g}=\frac{ \phi_{d+p}(\bm{w}_i|\bm{\mu}_g,\Sigma_g) \alpha_g}{\sum_{k=1}^ {G} \phi_{d+p}(\bm{w}_i|\bm{\mu}_k,\Sigma_k) \alpha_k},
\end{equation*}
for $g \in \{1,\ldots,G\}$.
To estimate the mixing probabilities $\bm{\alpha}$, a construction based on a stick-breaking process is employed. The parameter $\eta$ is updated from its full conditional distribution as follows:
\begin{equation*}
\eta|\bm{\alpha} \sim \text{Gamma}\left(a_{\eta}+G-1,b_{\eta} - \ln \alpha_G\right).
\end{equation*}
Specifically, $\nu_g$ for $g \in \{1,\ldots,G-1\}$ is updated using the following conditional distribution:
\begin{equation*}
\nu_g|\bm{z},\eta \sim \text{Beta}\left(1+n_g(\bm{z}),\eta + \sum_{j>g}n_j(\bm{z}) \right),
\end{equation*}
additionally, it is set that $\nu_G=1$. By applying a transformation to the values of $\bm{\nu}$, we can obtain the corresponding values of $\bm{\alpha}$ (see Section~\ref{subsec:Bayesian_est}).

The full conditional for $\delta_j|\bm{\Sigma}$ can be expressed as follows:
\begin{equation*}
	\delta_j|\bm{\Sigma} \sim \text{Gamma}(a^j_g,b^j_g),
\end{equation*}
where,
\begin{equation*}
	a^j_g=a_{\delta}+\frac{G(p+1)}{2} \quad\text{and}\quad
	b^j_g=b_{\delta}+\frac{1}{2}\sum_{g=1}^{G}(\bm{\Sigma}_g)^{-1}_{j,j} \; \mbox{for } j \in \{1,\ldots,p\}.
\end{equation*}
For the variance-covariance matrix, we can express the conditional posterior distribution as follows:
\begin{equation*}
 \bm{\Sigma}_g|\bm{W},\bm{z} \sim \text{Inverse Wishart}(\bm{\Delta}_g,f_g),   
\end{equation*}
such that,
\begin{equation*}
 f_g=f+n_g \quad\text{and}\quad
	\bm{\Delta}_g=\bm{\Delta}+\bm{S}_g+\frac{(\bar{\bm{w}}_g-\bm{\mu}_0)(\bar{\bm{w}}_g-\bm{\mu}_0)^\top}{1/h+1/n_g},   
\end{equation*}
where, $\bar{\bm{w}}_g =\sum_{\{i:Z_i=g\}}\bm{w}_i/n_g$ and $\bm{S}_g=\sum_{\{i:Z_i=g\}}(\bm{w}_i-\bar{\bm{w}}_g)(\bm{w}_i-\bar{\bm{w}}_g)^\top$. The conditional posterior distribution for the mean vector, given the variance-covariance matrix, can be expressed as follows:
\begin{equation*}
 \bm{\mu}_g|\bm{\Sigma}_g,\bm{W},\bm{Z} \sim \mathcal{N}_{d+p}(\bar{\bm{\mu}}_g,\bar{\bm{\Sigma}}_g),   
\end{equation*}
where,
\begin{equation*}
 \bar{\bm{\mu}}_g=\frac{ h\bm{\mu}_0 + n_g \bar{\bm{w}}_g}{h+n_g} \quad\text{and}\quad \bar{\bm{\Sigma}}_g=\frac{1}{h+n_g}\bm{\Sigma}_g.   
\end{equation*}
For further details on the computation of marginal distributions, please refer to  \href{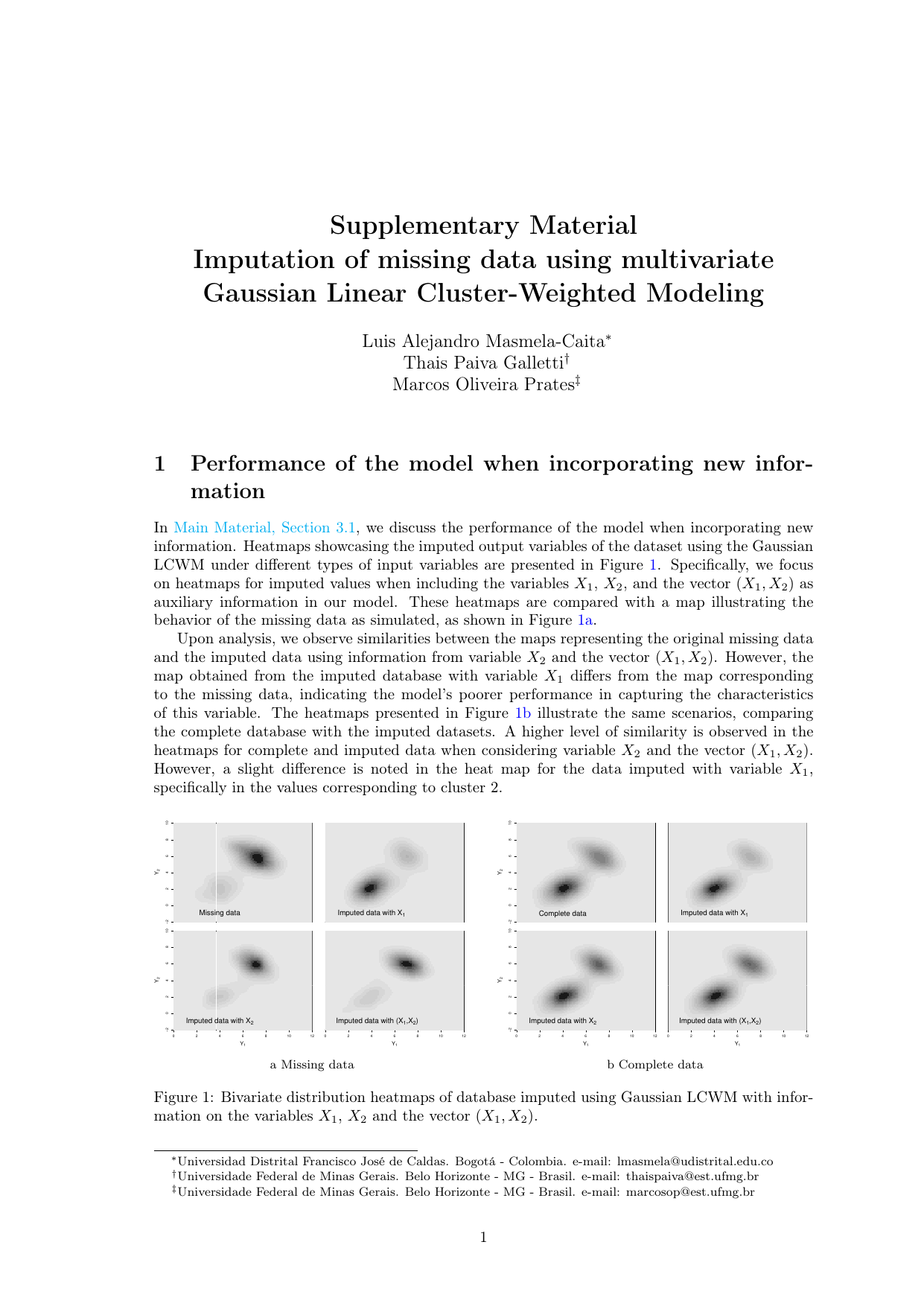}{Supplementary Material, Section} \ref{A-Bayesian_Inference}.

Once the parameters of the Gaussian FMM have been estimated, Theorem~\ref{PropM_1} allows us to obtain the estimated values for the Gaussian LCWM. Specifically, we can derive estimates for the parameters of the marginal and conditional distributions involved in the expression \eqref{lcwmg}, as well as for the responsibilities. The first set of responsibilities in \eqref{pesos_LCWMG} allows us to generate posterior samples of $\bm{Z}$ for all observations, this process is the basis for estimating the parameters $\bm{\alpha}$, $\bm{\mu}$ and $\bm{\Sigma}$. Meanwhile, the responsibilities specified in \eqref{pesos_LCWMG_x} are used to obtain posterior samples of $\bm{Z}$ for individuals with missing data, specifically used in the imputation step included in the algorithm.

Algorithm~\ref{algo:cwm} summarizes the Gibbs sampler algorithm for the Gaussian FMM 
used to implement the imputation procedure. The output variable can be partitioned as $\bm{y}=(\bm{y}_\text{obs},\bm{y}_\text{mis})$, where $\bm{y}_\text{obs}$ represents the observed portion and $\bm{y}_\text{mis}$ represents the missing portion. The values of the input variables corresponding to the previously defined partition will be denoted as $\bm{X}_{\text{obs}}$ and $\bm{X}_{\text{mis}}$, respectively. For classification purposes, the notation $\bm{z}$ is employed to classify the observations in the process of estimating the model parameters, while $\bm{z}_\text{mis}$ is used to classify the observations with missing information during the imputation process. The algorithm incorporates two imputation steps to update $\bm{z}_\text{mis}$ and $\bm{y}_\text{mis}$.

\begin{algorithm}[htp]
\DontPrintSemicolon 
\KwIn{$\bm{y}_{\text{obs}}$, $\bm{X}_{\text{obs}}$, $\bm{X}_{\text{mis}}$}
\KwOut{${\bm{y}}_{\text{mis}}$, $\hat{\bm{\alpha}}$, $\hat{\bm{\mu}}$, $\hat{\bm{\Sigma}}$}
\SetAlgoLined
 initialization: $\bm{y_\text{mis}}^{(0)}$, $\bm{\alpha}^{(0)}$, $\bm{\mu}^{(0)}$, $\bm{\Sigma}^{(0)}$\;
 \For{$j\in \{1,\ldots,J\}$}{
  generate $\bm{z}^{(j)}$ from 
  $p(\bm{z}|\bm{y_\text{obs}},\bm{X}_{\text{obs}},\bm{X}_{\text{mis}},\bm{y_\text{mis}}^{(j-1)},\bm{\alpha}^{(j-1)},\bm{\mu}^{(j-1)},\bm{\Sigma}^{(j-1)})$\;
  generate $\bm{\nu}^{(j)}$ from 
  $p(\bm{\nu}|\bm{z}^{(j)})$\;
  calculate $\bm{\alpha}^{(j)} = f(\bm{\nu}^{(j)})$\;
  \For{$g \in \{1,\ldots,G\}$}{
  generate $\Sigma_g^{(j)}$ from 
  $p(\Sigma_g|\bm{y_\text{obs}},\bm{X}_{\text{obs}},\bm{X}_{\text{mis}},\bm{y_\text{mis}}^{(j-1)},\bm{z}^{(j)})$\;
  generate $\bm{\mu}_g^{(j)}$ from 
  $p(\bm{\mu}_g|\Sigma_g^{(j)},\bm{y_\text{obs}},\bm{X}_{\text{obs}},\bm{X}_{\text{mis}},\bm{y_\text{mis}}^{(j-1)},\bm{z}^{(j)})$\;
  }
  generate $\bm{z_\text{mis}}^{(j)}$ from 
  $p(\bm{z_\text{mis}}|\bm{X}_{\text{mis}},\bm{\alpha}^{(j)},\bm{\mu}^{(j)},\bm{\Sigma}^{(j)})$\;
  generate $\bm{y_\text{mis}}^{(j)}$ from 
  $p(\bm{y_\text{mis}}|\bm{X}_{\text{mis}},\bm{z_\text{mis}}^{(j)},\bm{\mu}^{(j)},\bm{\Sigma}^{(j)})$\;
  sort $\bm{\alpha}^{(j)}$ in decreasing order\; 
  reorder $\bm{z}^{(j)}$, $\bm{z_\text{mis}}^{(j)}$, $\bm{\mu}^{(j)}$, $\bm{\Sigma}^{(j)}$ based on 
  the order of $\bm{\alpha}^{(j)}$\;
 }
 \KwResult{Complete imputed database and parameters posterior estimation}
 \caption{\textbf{Gaussian Linear CWM}}
 \label{algo:cwm}
\end{algorithm}

\section{Simulation studies}
\label{Simulation}

A dataset was simulated from a mixture of Gaussian distributions in four dimensions with $C=2$ components. Two of the variables were considered output variables ($p=2$), while the other two were considered input variables ($d=2$). The database contains $n = 1000$ observations of the form $(x_1,x_2,y_1,y_2)$. The mixing probabilities are $\alpha_1=0.6$ and $\alpha_2=0.4$, the mean vectors $ \bm{\mu}_1=(1{.}0,3{.}0,4{.}0,2{.}0)$ and $\bm{\mu}_2=(1{.}0,9{.}0,7{.}0,6{.}0)$, and the covariance matrices are 
\begin{equation*}
\Sigma_1=\left(\begin{matrix} 
\phantom{-}1{.}00 & \phantom{-}0{.}50 & \phantom{-}0{.}50 & \phantom{-}0{.}50\\ 
\phantom{-}0{.}50 & \phantom{-}1{.}00 & \phantom{-}0{.}50 & \phantom{-}0{.}50 \\ 
\phantom{-}0{.}50 & \phantom{-}0{.}50 & \phantom{-}1{.}00 & \phantom{-}0{.}50 \\
\phantom{-}0{.}50 & \phantom{-}0{.}50 & \phantom{-}0{.}50 & \phantom{-}1{.}00
\end{matrix}\right) \text{,  } \quad \Sigma_2=\left(\begin{matrix} 
\phantom{-}1{.}00 & -0{.}50 & -0{.}50 & \phantom{-}0{.}50\\ 
-0{.}50 & \phantom{-}1{.}00 & \phantom{-}0{.}50 & -0{.}50 \\ 
-0{.}50 & \phantom{-}0{.}50 & \phantom{-}1{.}00 & -0{.}50 \\
\phantom{-}0{.}50 & -0{.}50 & -0{.}50 & \phantom{-}1{.}00
\end{matrix}\right) .
\end{equation*}
Missing data, generated under the mechanism of MAR, are incorporated into the dataset. In this context, the variables $Y_1$ and $Y_2$ contain missing information, while the variables $X_1$ and $X_2$ are fully observed. Approximately 10\% of the data in cluster 1 was randomly selected and labeled as missing, whereas around 50\% of the data in cluster 2 was selected for the same purpose. Table \ref{cuadro:datos_simulados_mult} displays the distribution of the simulated data. In each cell of the table, the corresponding absolute frequency is presented. To the right of the absolute frequency, in parentheses, the relative frequencies of the rows are indicated, which correspond to the proportion of observed and missing data for each cluster. At the bottom of each cell, in parentheses, the relative frequencies by columns are shown, which indicate the distribution of observed data by cluster for the first column, while for the second column it is shown for missing data. 
An imputation procedure that only considers information from observed data will impute approximately 70\% for the first cluster and 30\% for the second. However, upon entering new information, we want the model to impute missing data according to their distribution by cluster, which would be approximately 22\% for the first cluster and 78\% for the second.
\begin{table}[htb]
\centering

\begin{tabular}{c|cl|cl|cl}
\multicolumn{1}{l|}{}  & \multicolumn{2}{c|}{\textbf{observed}}& \multicolumn{2}{c|}{\textbf{missing}}& \multicolumn{2}{c}{\textbf{complete}} \\ \hline
\multirow{1}{*}{\textbf{cluster 1}} & 516& \textit{(89.7\%)} & \phantom{-}59     & \textit{(10.3\%)} & 575 & \textit{(100\%)} \\
& \multicolumn{1}{l}{\textit{(69.9\%)}} &  & \multicolumn{1}{l}{\textit{(22.5\%)}} &  & \multicolumn{1}{l}{\textit{(57.5\%)}} & \\ 
\hline
\multirow{1}{*}{\textbf{cluster 2}}  & 222  & \textit{(52.2\%)} & 203 & \textit{(47.8\%)} & 425  & \textit{(100\%)} \\
& \multicolumn{1}{l}{\textit{(30.1\%)}} &  & \multicolumn{1}{l}{\textit{(77.5\%)}} &  & \multicolumn{1}{l}{\textit{(42.5\%)}} &  \\ 
\hline
\multirow{1}{*}{\textbf{total}}  & 738 & \textit{(68.3\%)} & 262 & \textit{(31.7\%)} & 1000  & \textit{(100\%)} \\
& \multicolumn{1}{l}{\textit{(100\%)}} &  & \multicolumn{1}{l}{\textit{(100\%)}}  & & \multicolumn{1}{l}{\textit{(100\%)}}  &                 
\end{tabular}

\caption{Distribution of simulated data and pattern of missing data under a MAR mechanism for the multivariate case. }
\label{cuadro:datos_simulados_mult}
\end{table}

\figurename~\ref{fig:disp_simulados_multi} presents a pairwise plot of the data where the distribution of observed and missing data is illustrated. Two types of input variables with fully observed information are considered, the input variable $X_1$ that does not give information on which component to impute from, and the input variable $X_2$ distributed \textit{separately among the components}\footnote{We refer to a variable distributed \textit{separately among the components} as a variable whose distribution is multimodal (in the case of \figurename~\ref{fig:disp_simulados_multi} bimodal) and by knowing its value, it is possible to determine which of the components it belongs to.} and whose information allows us to conclude the component with which to impute (see the projections $X_1 \times Y_1$, $X_1 \times Y_2$, $X_2 \times Y_1$, and $X_2 \times Y_2$ in the planes in the lower left corner of \figurename~\ref{fig:disp_simulados_multi}).

\begin{figure}[htb]
    \centering
	\includegraphics[width=1.0\textwidth]{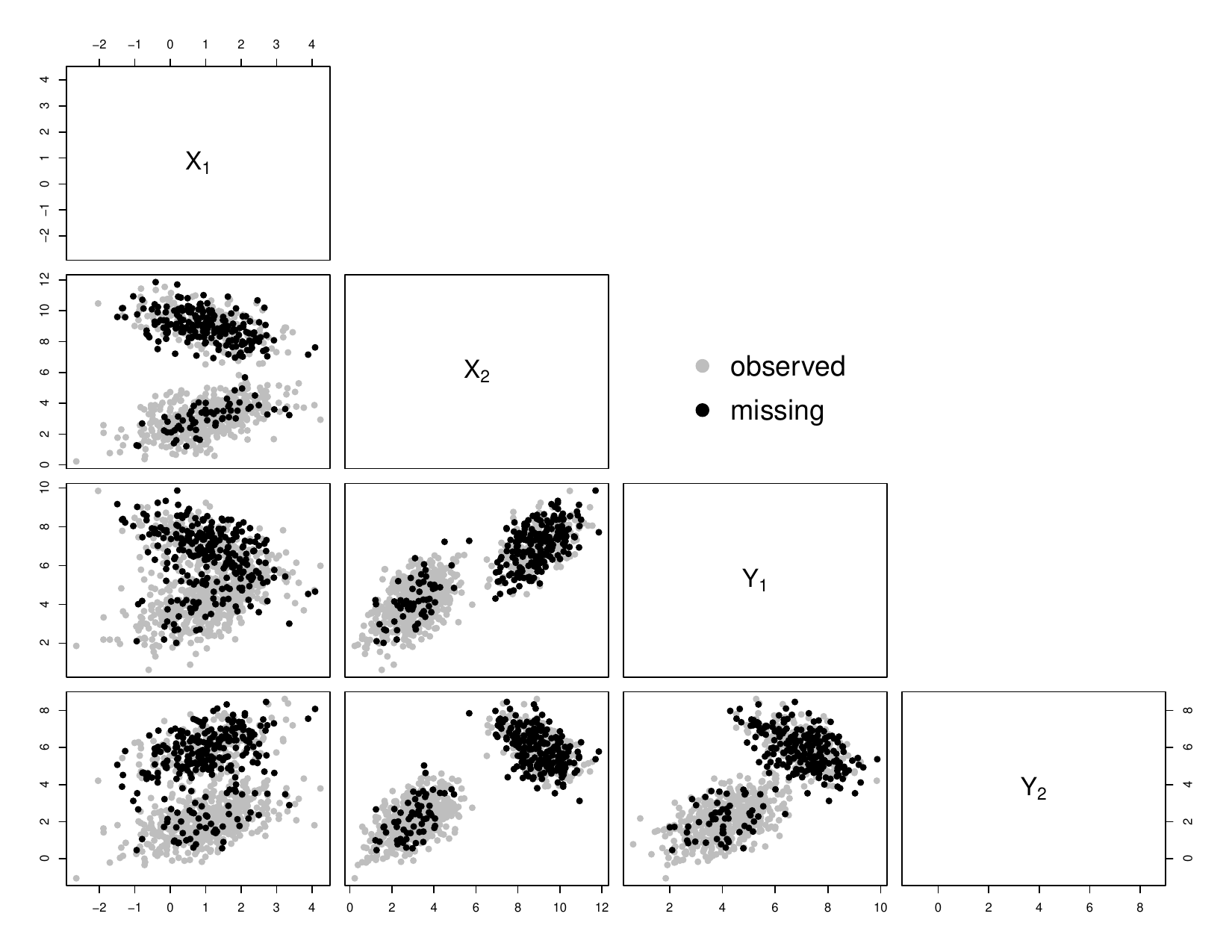}
	\caption[Observed and missing data generated under an MAR mechanism for the multivariate case.]{Pairwise plot of the variables in the simulated database. Observed and missing data generated under a MAR mechanism for the multivariate case.}
	\label{fig:disp_simulados_multi}
\end{figure}

\subsection{Model performance when new information is included}
\label{subsec:LCWM_X1_X2_X1X2_mult}

In this section, our objective is to analyze the type of information that can be incorporated into the model through the input variables. Specifically, we will focus on the variables $X_1$, $X_2$, as well as a vector consisting of both variables, in the case of our simulated data. In the first scenario, the input variable does not provide specific information to determine which of the two components to use for imputation. In the second scenario, the input variable is distributed separately among components, enabling the model to accurately determine the appropriate component for imputation. Finally, in the third scenario, we construct an input vector by combining the two variables mentioned in the previous scenarios. The inclusion of a variable that is distributed separately among components allows the vector to inherit this desirable characteristic.

\figurename~\ref{fig:imp_mult} illustrates the construction of the imputation models using information obtained from the three described scenarios. Specifically, \figurename~\ref{fig:imp_mult_X1} showcases the imputation model construction when auxiliary information is solely provided through the variable $X_1$. However, this variable does not provide information about which component you should use for the imputation process. Consequently, the decision is based on the estimates of the mixture probabilities, which strongly depend on the number of observations in each component. The left-hand panels display the $X_1 \times Y_1$ and $X_1 \times Y_2$ projection planes. Within these panels, 95\% quantile ellipses are shown for each component alongside their respective regression lines. The imputations generated by the model cluster closely around the regression lines, as depicted in the graphs. The right-hand side panels present projections involving the variable $X_2$, revealing that several imputed observations are located far from the regions of observed data.

In the scenario where we incorporate information from the variable $X_2$, \figurename~\ref{fig:imp_mult_X2} illustrates the construction of the model. Similar to the previous case, the figure presents the 95\% quantile ellipses and regression lines. By examining all four panels collectively, we observe that even when imputing solely based on information from variable $X_2$, the imputed data consistently originates from regions with observed information across all projection planes. This observation indicates the high quality of information provided to the model by the input variable $X_2$. As in the previous case, the imputations are generated around the regression lines.
\figurename~\ref{fig:imp_mult_X1X2} presents the implementation of the Gaussian LCWM procedure, which utilizes the input variables $X_1$ and $X_2$ jointly. The observed behavior is similar to the case where auxiliary information is derived solely from the input variable $X_2$. Imputed values are generated from regions with observed information and around the regression lines. It should be noted that, in this case, there is a regression plane and a quantile ellipsoid for the imputation process. The figures illustrate their projections on the relevant planes of interest.

\begin{figure}[htb]
\centering
\begin{subfigure}[]{.48\textwidth}
\includegraphics[width=1.0\textwidth]{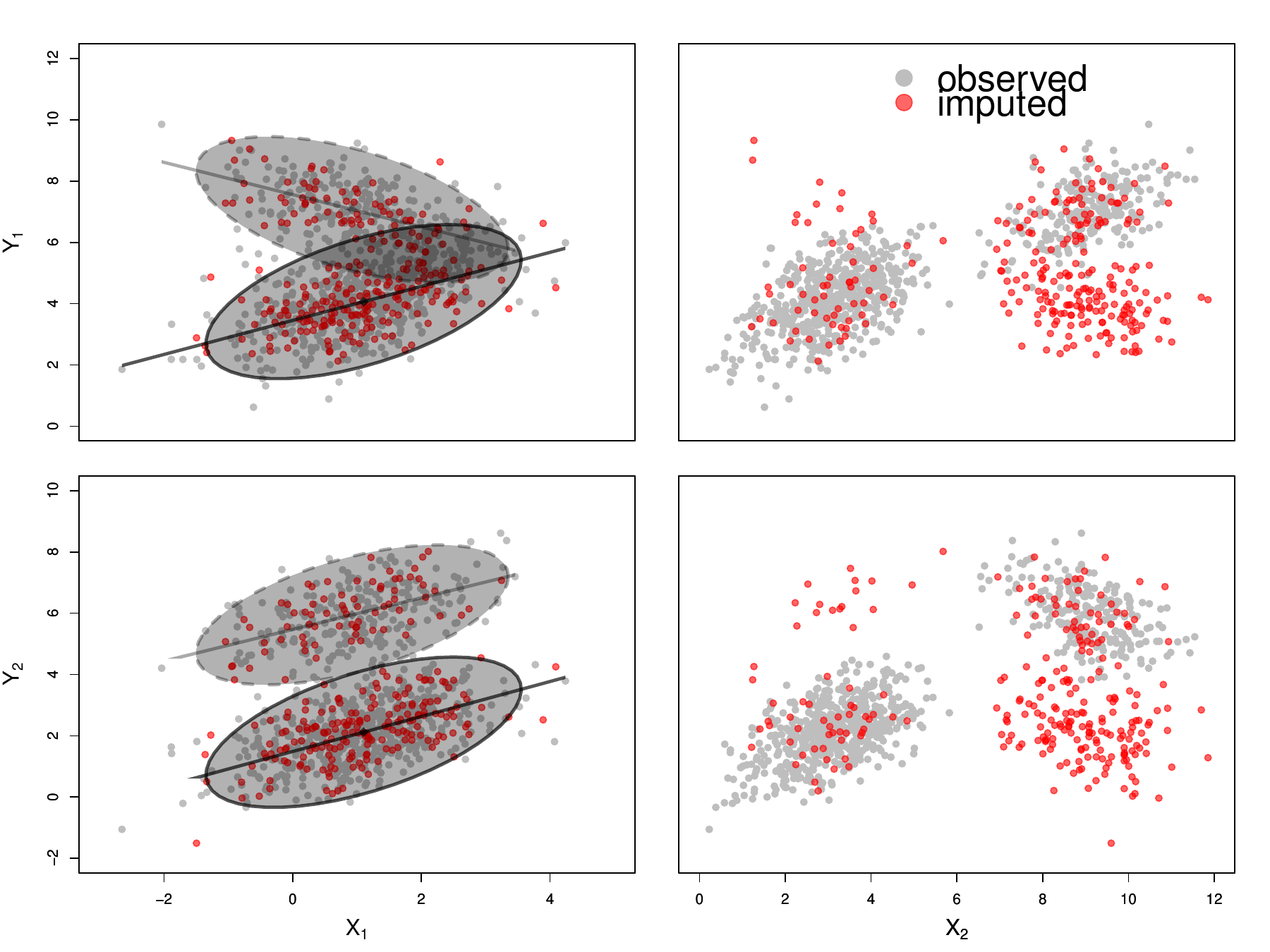}
\caption{With auxiliary information from variable $X_1$.}
\label{fig:imp_mult_X1}
\end{subfigure}
\hfill
\begin{subfigure}[]{.48\textwidth}
\includegraphics[width=1.0\textwidth]{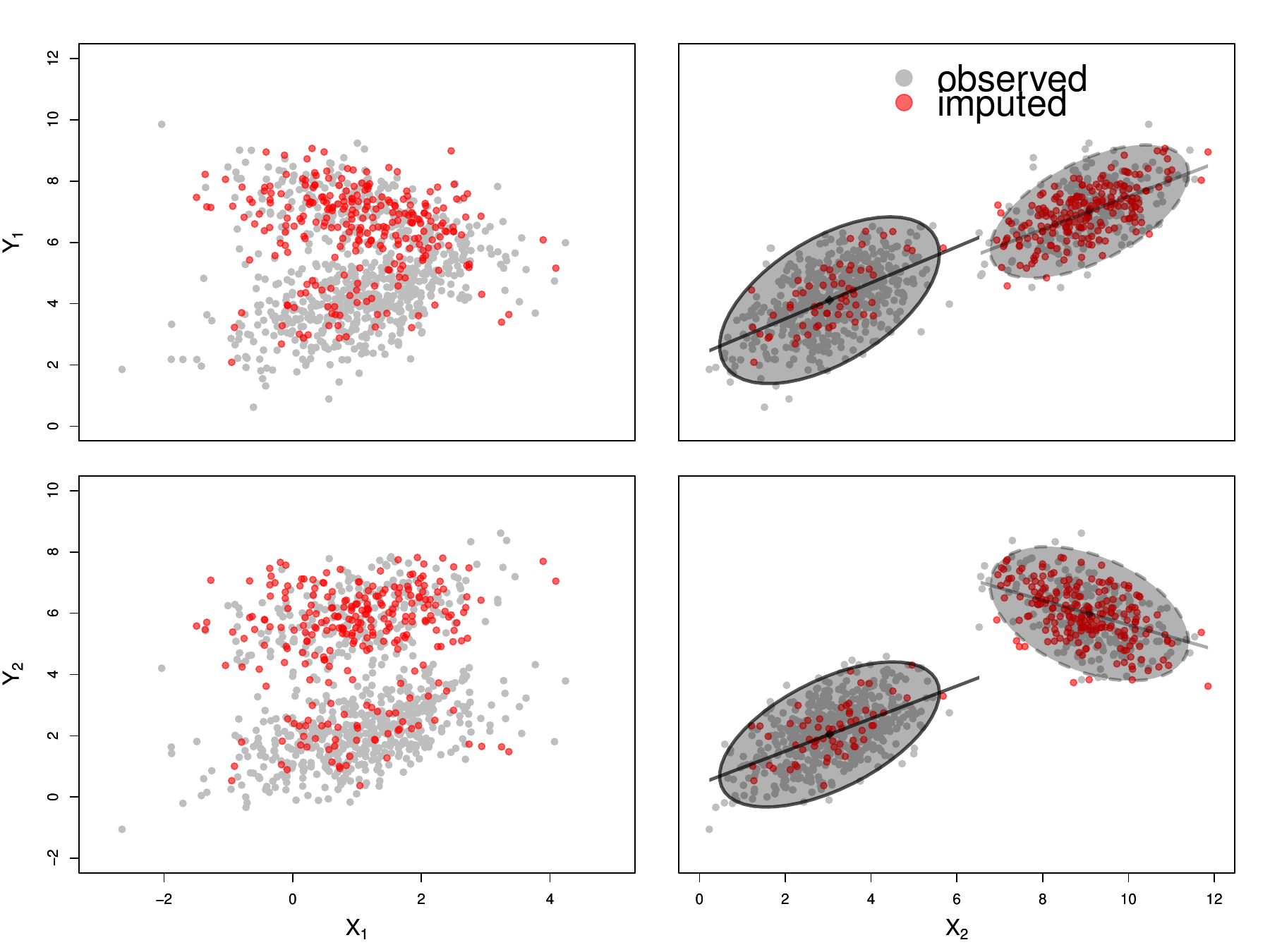}
\caption{With auxiliary information from variable $X_2$.}
\label{fig:imp_mult_X2}
\end{subfigure}
\hfill
\begin{subfigure}[]{.48\textwidth}
\includegraphics[width=1.0\textwidth]{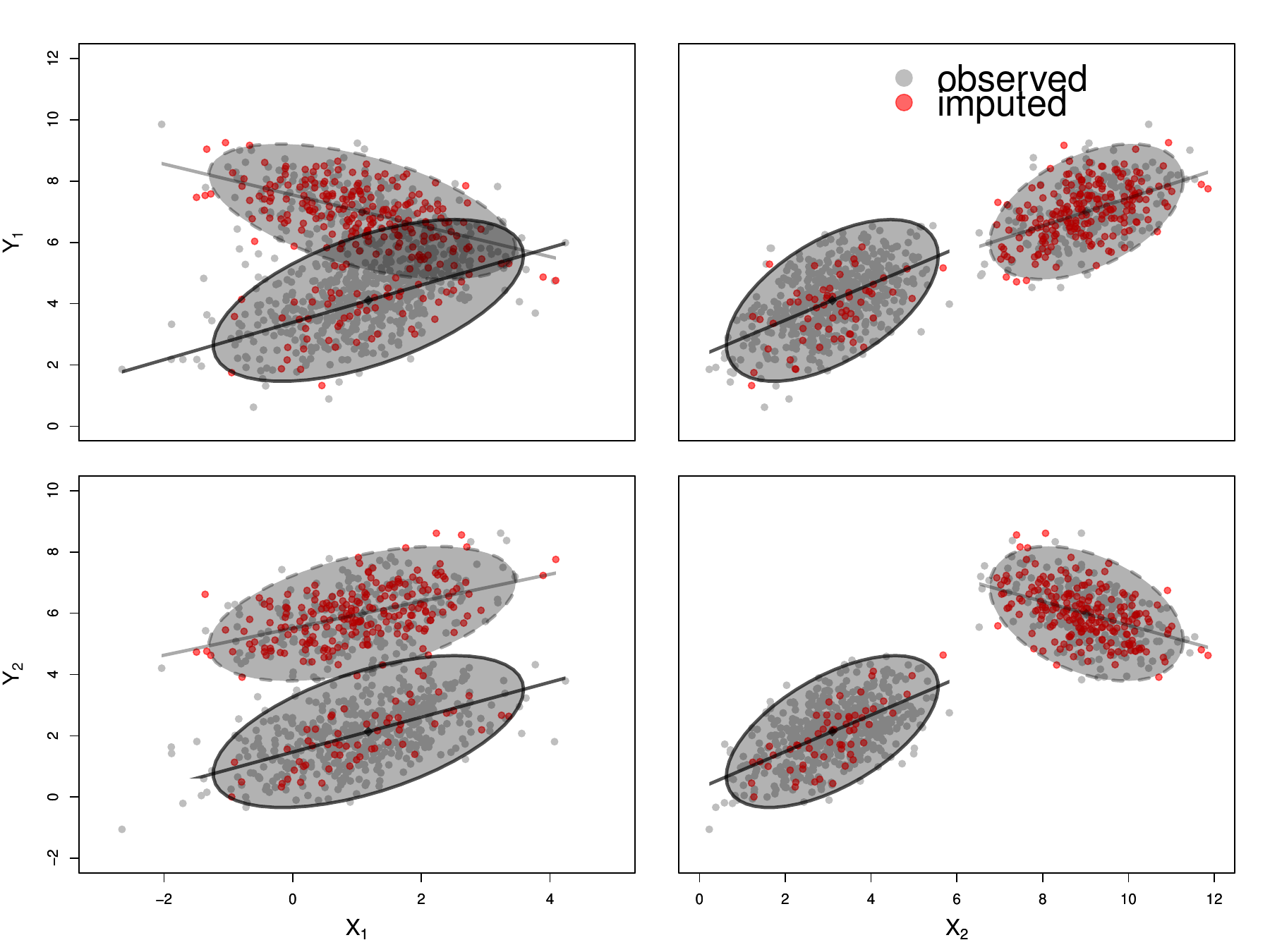}
\caption{With auxiliary information from vector $(X_1,X_2)$.}
\label{fig:imp_mult_X1X2}
\end{subfigure}
\caption[Construction of the imputation process through the Gaussian LCWM.]{Construction of the imputation process through the Gaussian LCWM using information from different types of variables. Observed and imputed data in the different projection planes. 95\% quantile ellipses and regression lines.}
\label{fig:imp_mult}
\end{figure}

In \figurename~\ref{fig:pairsplot_mult}, the pair plots illustrate the three imputation processes being discussed. Each scatter plot in the matrix displays the observed, missing, and imputed data. We first focus on the imputation process utilizing the input variable $X_1$. \figurename~\ref{fig:pairsplot_cwmx1_multi} provides insights into the behavior of the imputations (depicted by red dots) in relation to the missing data pattern (represented by black dots). Notably, the procedure incorrectly imputes data based on the proportion within each cluster. Cluster 1, despite having the smallest proportion of missing data, is imputed with the largest proportion of data. As mentioned earlier, this variable does not offer any information regarding which component to use for imputation. The imputation procedure relies on estimates of mixture probabilities, which are strongly influenced by the proportion of observed data in each component. By examining the projection planes $X_2 \times Y_1$ and $X_2 \times Y_2$, we can identify regions with inaccurately imputed observations that do not align with either observed or missing data. These projections effectively highlight the errors made during the imputation process, particularly noticeable in the $Y_1 \times Y_2$ plane due to the discrepancy in proportions between imputed and missing data within each component.

In the subsequent two scenarios, we witness a significant improvement in our model's performance. By incorporating an auxiliary variable or vector that is distributed separately among components, the Gaussian LCWM method enables precise determination of the component for imputation. \figurename~\ref{fig:pairsplot_cwmx2_multi} illustrates the imputation process using information from the variable $X_2$. As this variable is distributed separately among components, we can observe that the imputations accurately cover the regions corresponding to the missing data in proportion to their occurrence.

Moving on to \figurename~\ref{fig:pairsplot_cwmx1x2_multi}, which represents our model utilizing the input vector $(X_1, X_2)$, we notice a similar pattern to that of \figurename~\ref{fig:pairsplot_cwmx2_multi} in terms of the imputation results. The scatter plot in the $X_1 \times X_2$ projection panel reveals two distinct groups. This indicates that when incorporating information from both variables simultaneously, the distribution of the input vector $(X_1, X_2)$ becomes separated among the components.

\begin{figure}[htb]
\centering
\begin{subfigure}[]{0.48\textwidth}
\includegraphics[width=1.0\textwidth]{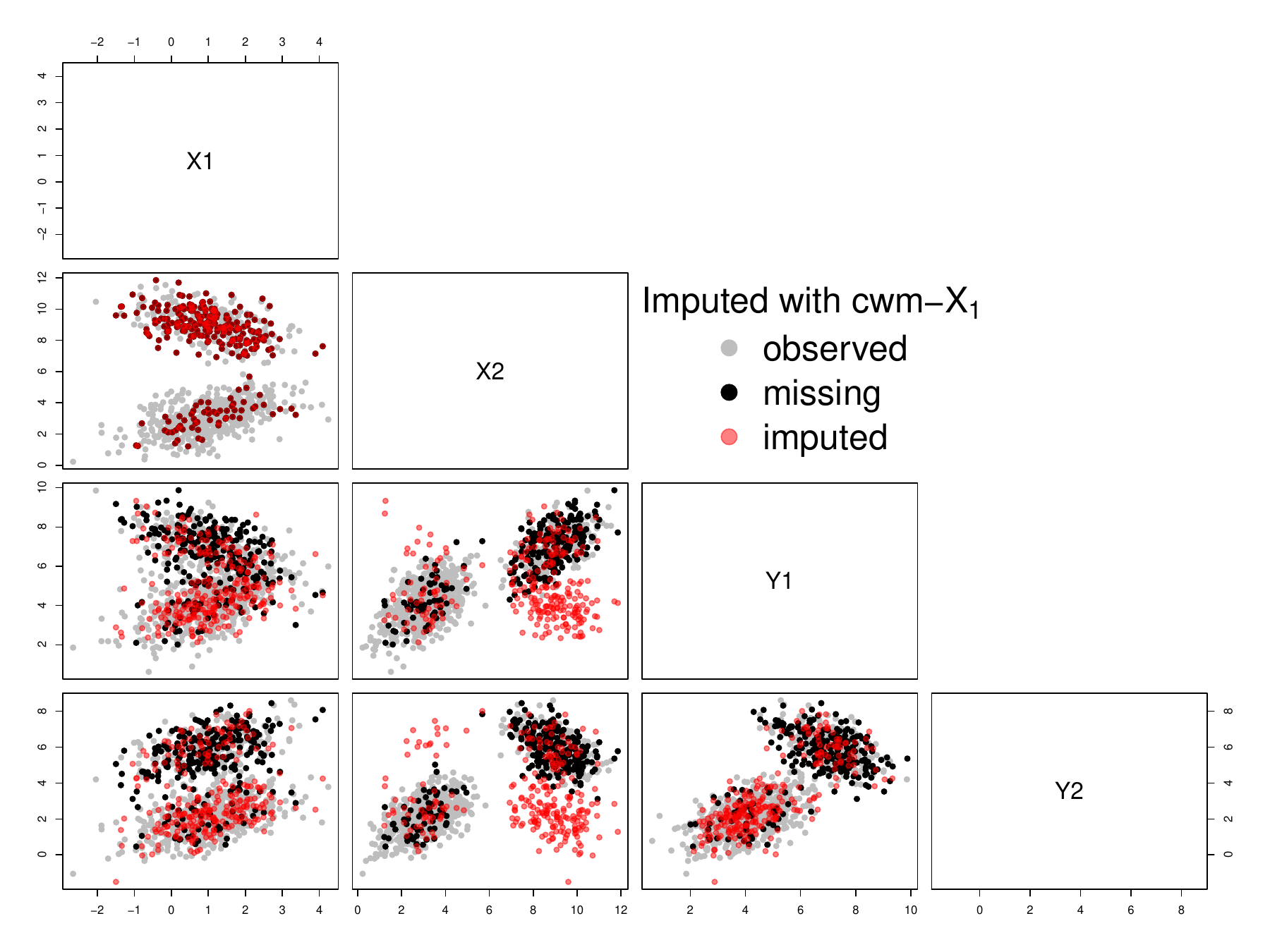}
\caption{Imputed with information from variable $X_1$}
\label{fig:pairsplot_cwmx1_multi}
\end{subfigure}
\hfill
\begin{subfigure}[]{0.48\textwidth}
\includegraphics[width=1.0\textwidth]{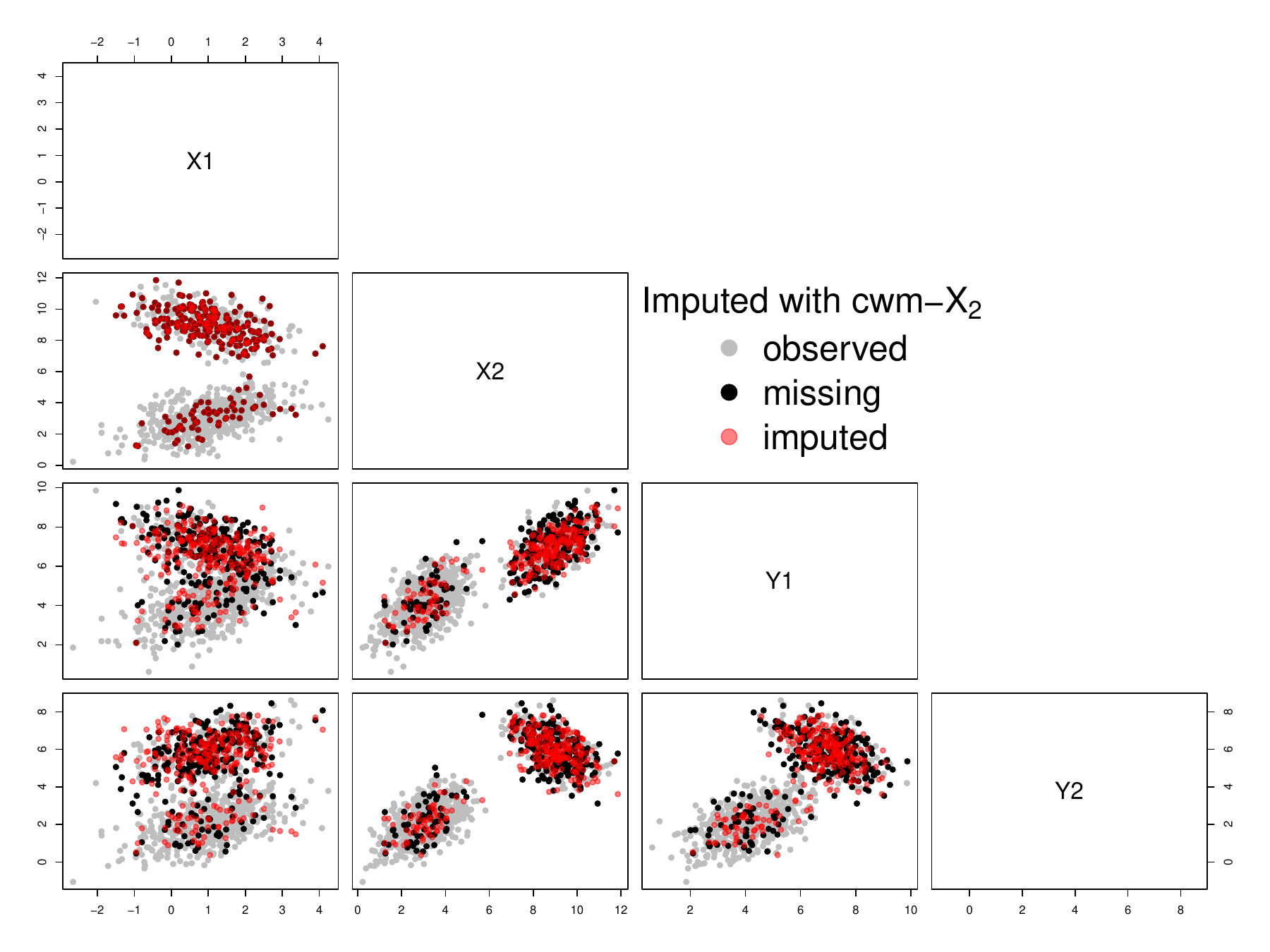}
\caption{Imputed with information from variable $X_2$}
\label{fig:pairsplot_cwmx2_multi}
\end{subfigure}
\hfill
\begin{subfigure}[]{0.48\textwidth}
\includegraphics[width=1.0\textwidth]{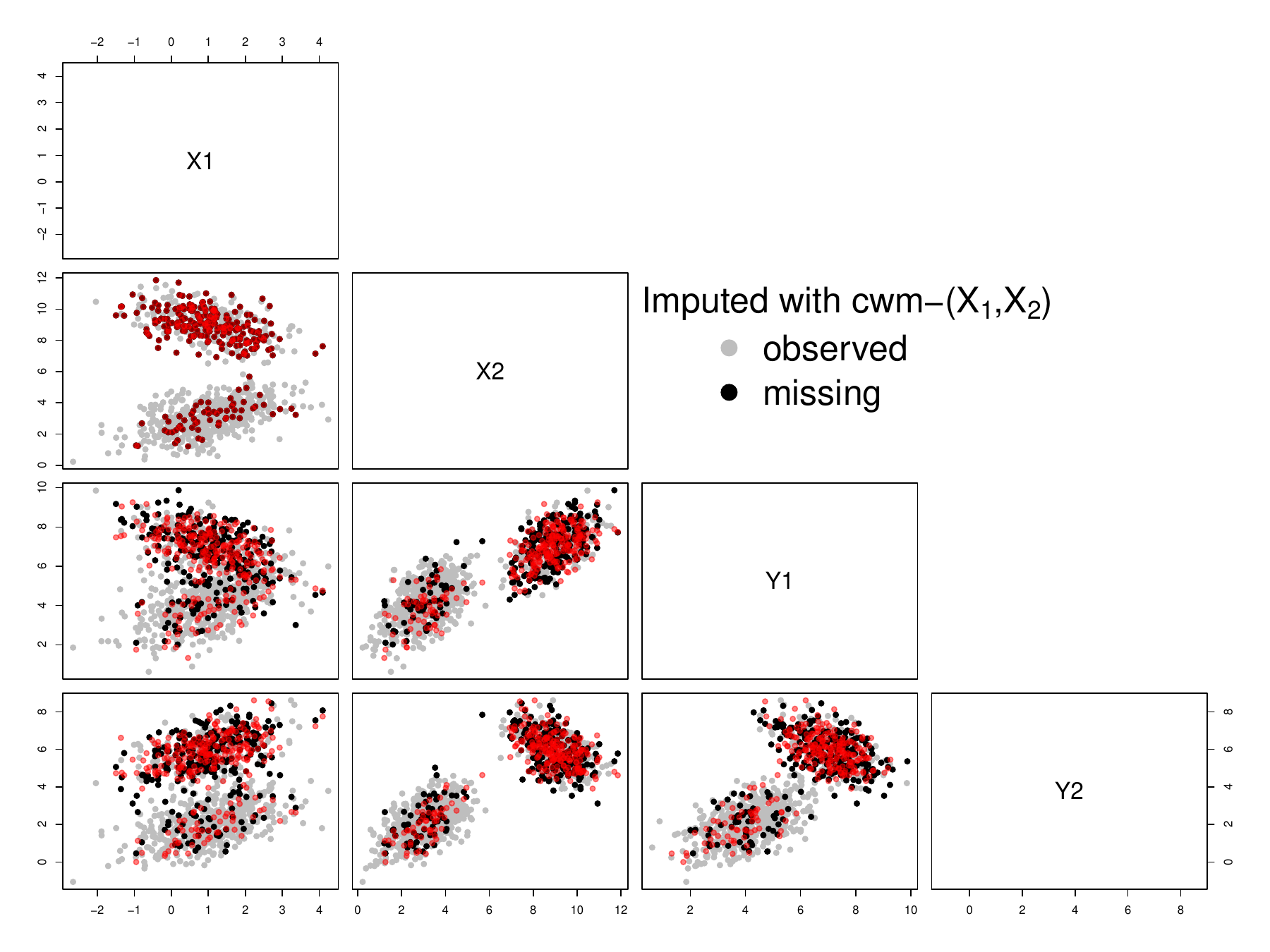}
\caption{Imputed with information from vector $(X_1,X_2)$.}
\label{fig:pairsplot_cwmx1x2_multi}
\end{subfigure}
\caption[Pairwise plots of the variables in the simulated multivariate database.]{Pairwise plots of the simulated multivariate database imputed by Gaussian LCWM with information on different types of variables.}
\label{fig:pairsplot_mult}
\end{figure}

Further analysis to support the findings described above are presented in the \href{JMVA_Supplementar.pdf}{Supplementary Material, Section} \ref{A-sec1:performance}. There, we present heat maps to assess both the missing data set and the complete data set. The results corroborate and strengthen our conclusions.

\subsection{Gaussian LCWM performance relative to other imputation methods}
\label{subsec:LCWM_meanpmmnorm_mult}

Next, we compared our imputation model, the Gaussian LCWM ({\ttfamily cwm}), with other procedures. We first consider the mean method ({\ttfamily mean}) \citep{paiva2017stop}, which does not incorporate any additional information to perform the imputation. We also consider two other Bayesian methods that create imputations using regression models, the \textit{predictive mean matching} ({\ttfamily pmm}) \citep{van2018flexible} and \textit{Bayesian multiple imputation} ({\ttfamily norm}) \citep{little2019statistical}. For these last two procedures, we use as auxiliary information the same input variables as in the previous section, that is, variables $X_1$, $X_2$ and the vector $(X_1,X_2)$. \figurename~\ref{fig:pairsplotnorm_mult} and \figurename~\ref{fig:pairsplotpmm_mult} depict the results of the imputations conducted with the provided auxiliary information when applying the {\ttfamily norm} and {\ttfamily pmm} imputation methods, respectively.

\begin{figure}[htb]
\centering
\begin{subfigure}[]{0.48\textwidth}
\includegraphics[width=1.0\textwidth]{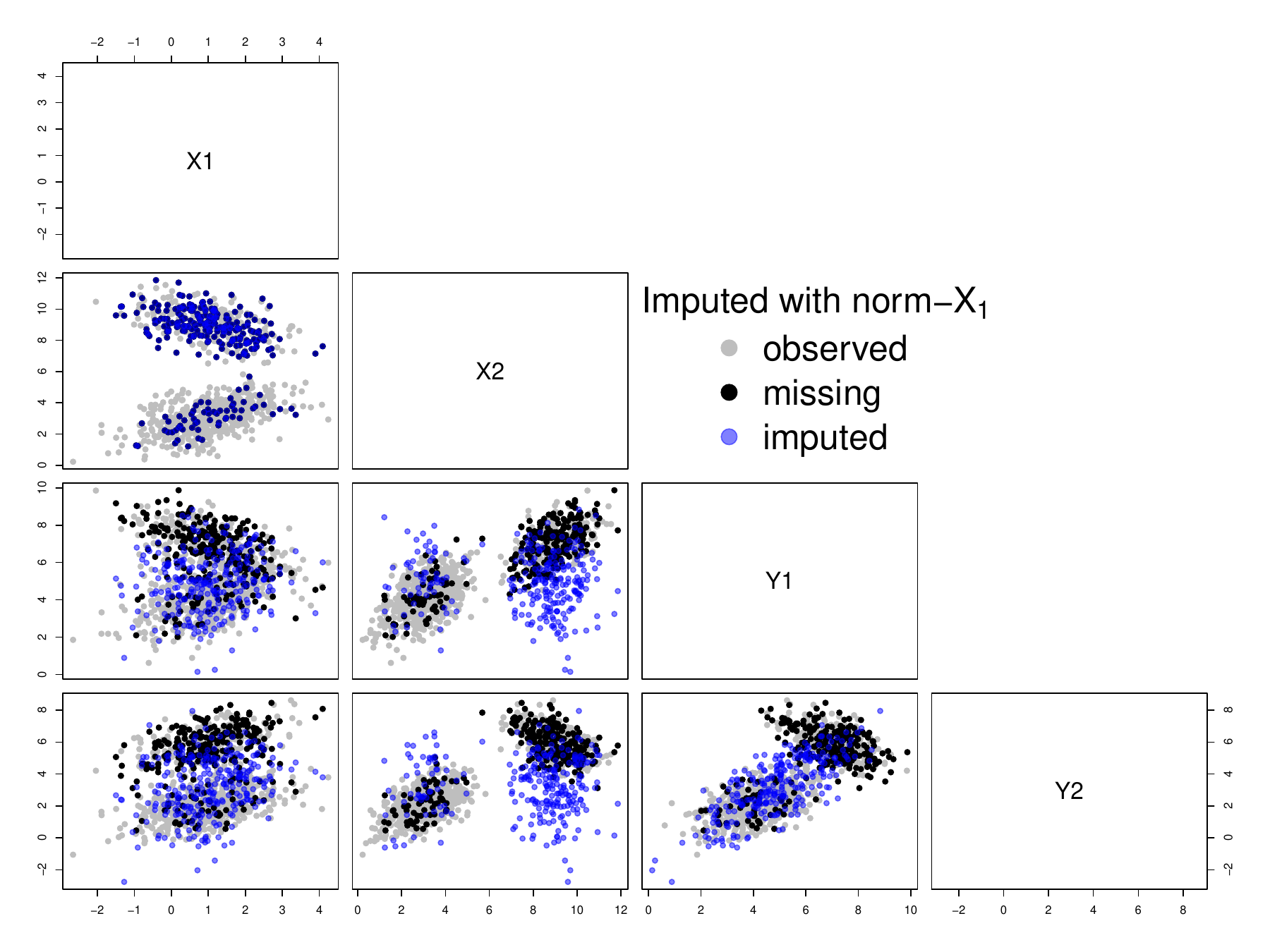}
\caption{Imputed with information from variable $X_1$}
\label{fig:pairsplot_normx1_multi}
\end{subfigure}
\hfill
\begin{subfigure}[]{0.48\textwidth}
\includegraphics[width=1.0\textwidth]{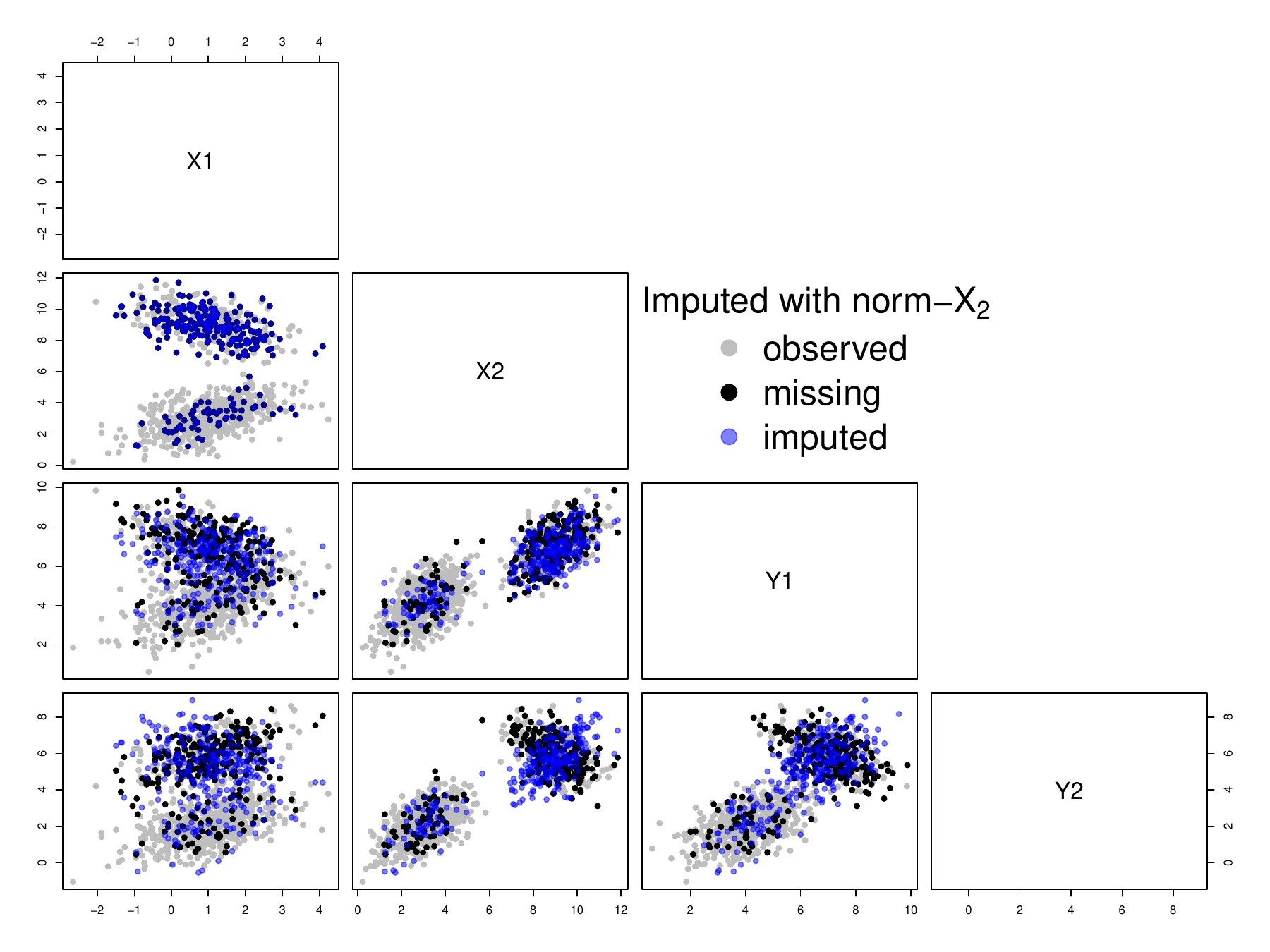}
\caption{Imputed with information from variable $X_2$}
\label{fig:pairsplot_normx2_multi}
\end{subfigure}
\hfill
\begin{subfigure}[]{0.48\textwidth}
\includegraphics[width=1.0\textwidth]{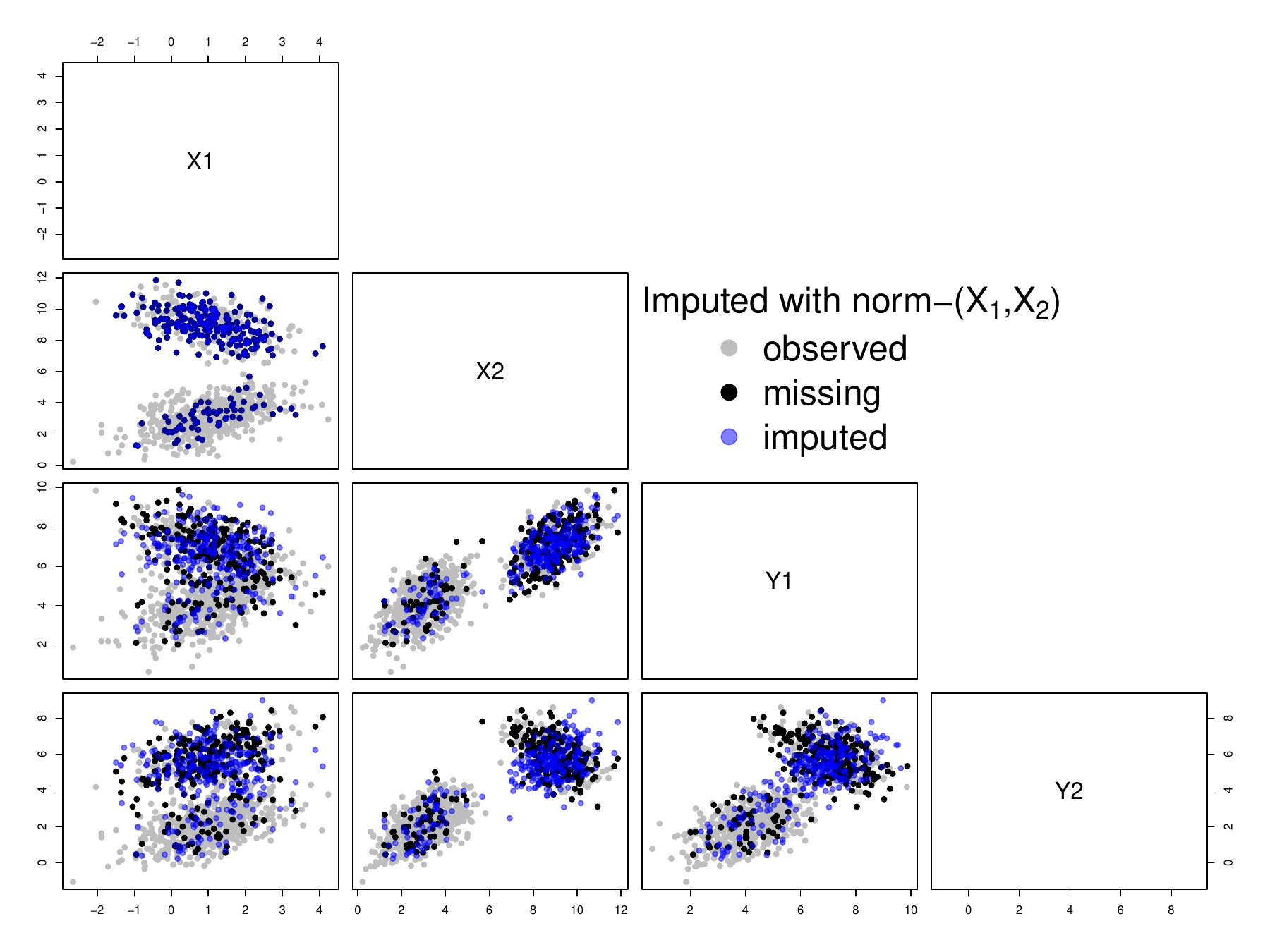}
\caption{Imputed with information from vector $(X_1,X_2)$.}
\label{fig:pairsplot_normx1x2_multi}
\end{subfigure}
\caption[Pairwise plots of the variables in the simulated multivariate database.]{Pairwise plots of the simulated multivariate database imputed by {\ttfamily norm} methodology with information on different types of variables.}
\label{fig:pairsplotnorm_mult}
\end{figure}
The behavior of the input variables in relation to the different imputation methods demonstrates a consistent pattern. When the variable $X_1$ is used, it provides poor information to the imputation procedure. However, there is a significant improvement in the imputations when utilizing the variable $X_2$ and the vector $(X_1,X_2)$. This is supported by \figurename~\ref{fig:pairsplotnorm_mult} which illustrate the results of the imputation process using the {\ttfamily norm} method. Upon comparing the imputations performed by the {\ttfamily cwm} method in 
\figurename~\ref{fig:pairsplot_mult} with those carried out by the {\ttfamily norm} method in \figurename~\ref{fig:pairsplotnorm_mult}, specifically in the panels depicting the $Y_1 \times Y_2$ projection planes, it is evident that the {\ttfamily cwm} method accurately imputes data within the two corresponding regions, whereas the {\ttfamily norm} methodology incorrectly assigns some imputations between the two regions. This visual analysis suggests superior performance of our model. To further substantiate this claim, a quantitative evaluation of the imputation processes will be conducted.

\begin{figure}[htb]
\centering
\begin{subfigure}[]{0.48\textwidth}
\includegraphics[width=1.0\textwidth]{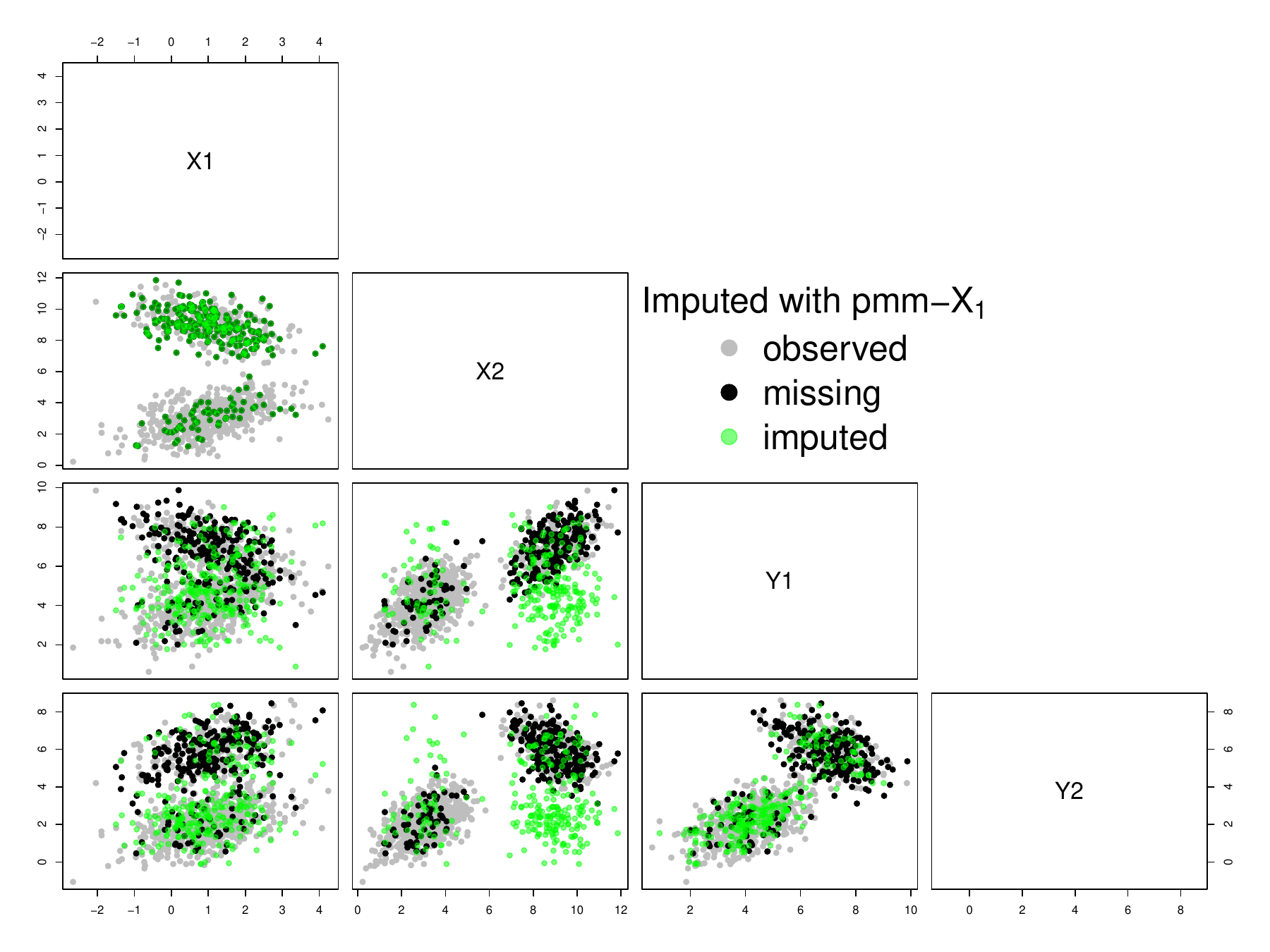}
\caption{Imputed with information from variable $X_1$}
\label{fig:pairsplot_pmmx1_multi}
\end{subfigure}
\hfill
\begin{subfigure}[]{0.48\textwidth}
\includegraphics[width=1.0\textwidth]{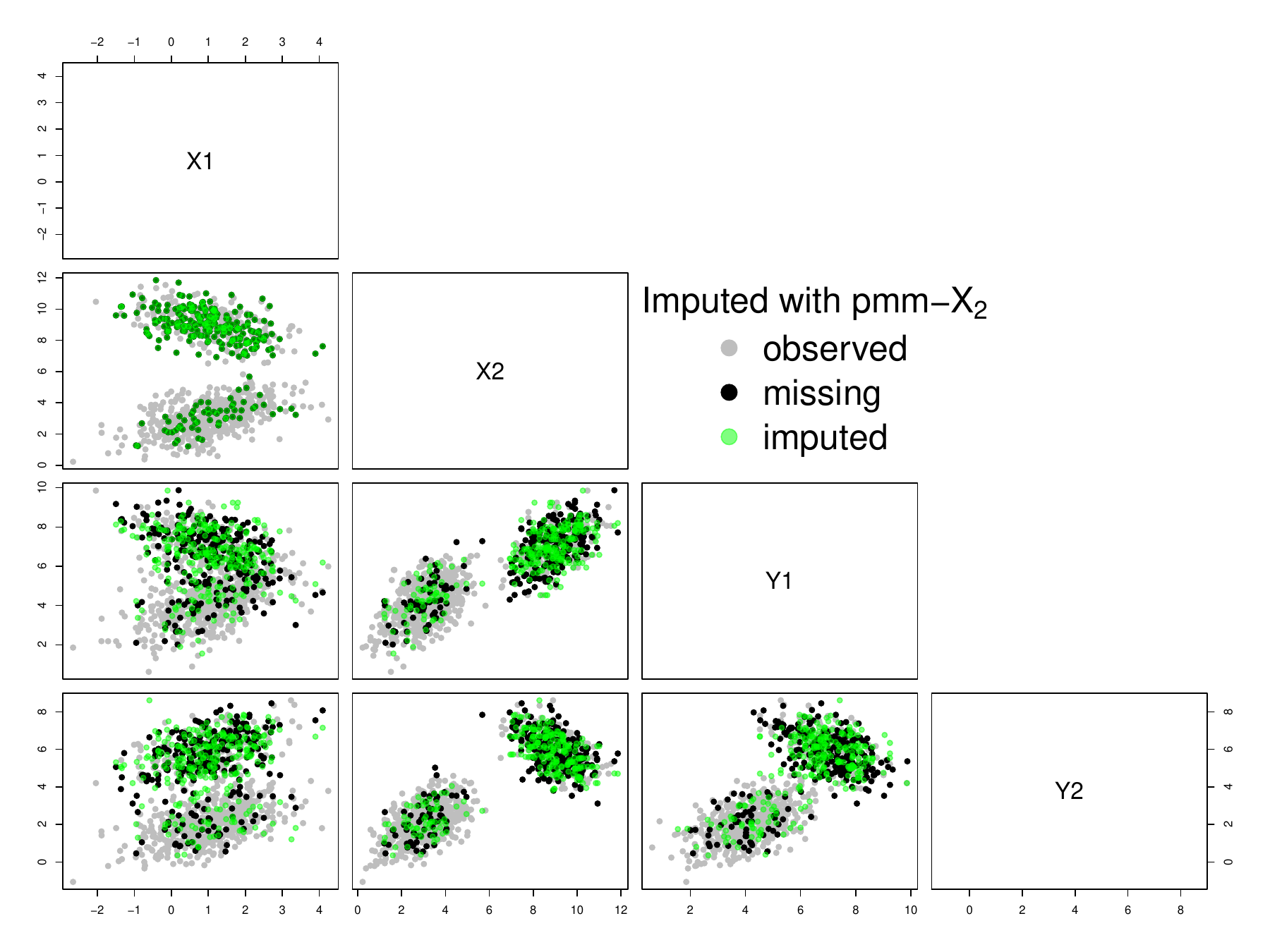}
\caption{Imputed with information from variable $X_2$}
\label{fig:pairsplot_pmmx2_multi}
\end{subfigure}
\hfill
\begin{subfigure}[]{0.48\textwidth}
\includegraphics[width=1.0\textwidth]{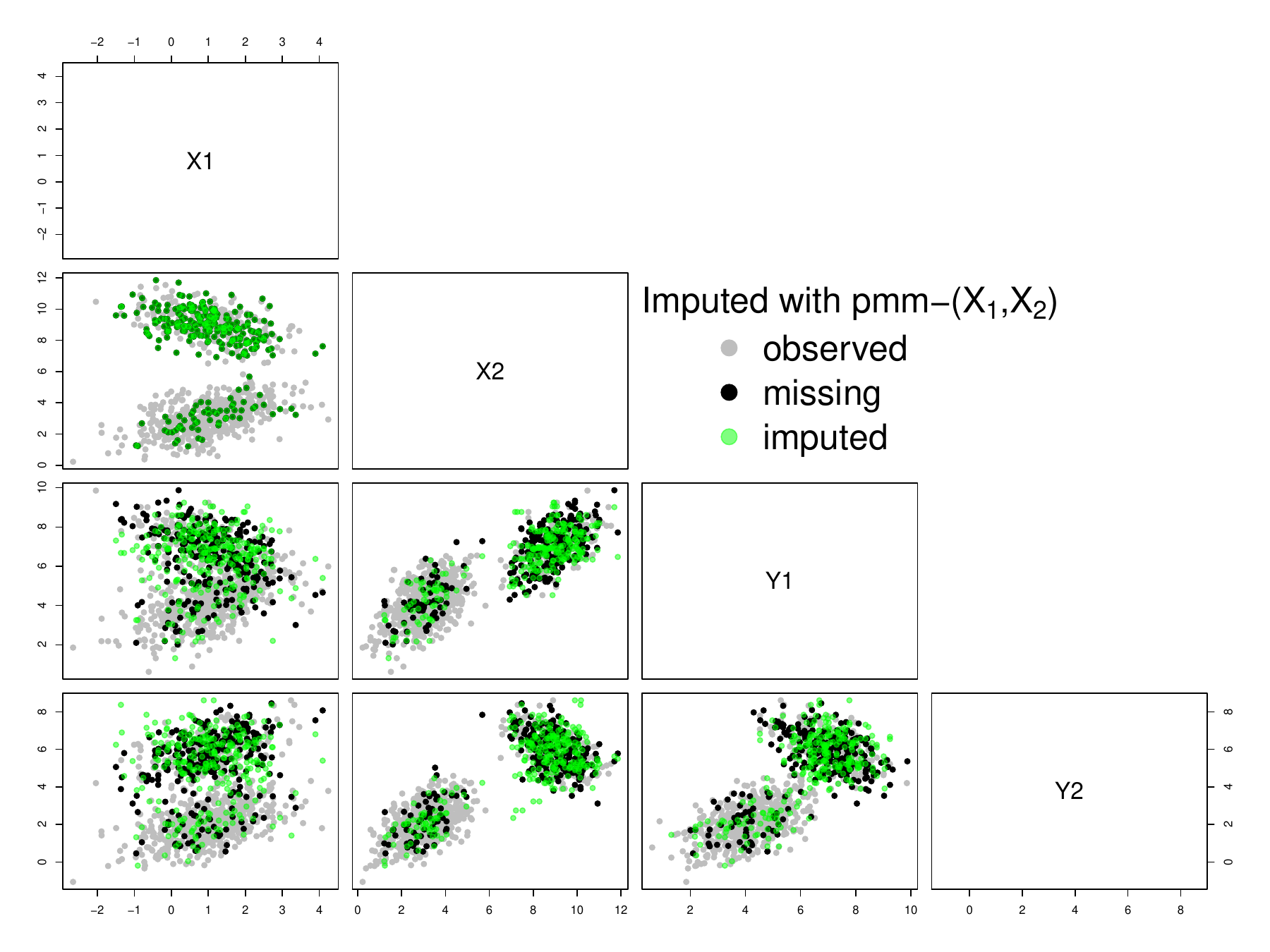}
\caption{Imputed with information from vector $(X_1,X_2)$.}
\label{fig:pairsplot_pmmx1x2_multi}
\end{subfigure}
\caption[Pairwise plots of the variables in the simulated multivariate database.]{Pairwise plots of the simulated multivariate database imputed by {\ttfamily pmm} methodology with information on different types of variables.}
\label{fig:pairsplotpmm_mult}
\end{figure}

The methodology implemented using the {\ttfamily pmm} imputation procedure closely followed the \texttt{cwm} approach. A visual analysis, comparing the graphs in \figurename~\ref{fig:pairsplot_mult} and \figurename~\ref{fig:pairsplotpmm_mult}, allows us to conclude that the {\ttfamily pmm} methodology, in addition to responding in a similar manner to the information provided through different input variables, demonstrates adequate performance when supplied with high-quality information such as variable $X_2$ and the vector $(X_1, X_2)$. It accurately imputes the data within the regions where there is observed information and does so in the appropriate proportions for each component. However, a few imputations that deviate from this characterization can be observed in the projection on the plane $X_2 \times Y_2$ when we impute using information from the vector $(X_1, X_2)$. One advantage of our methodology is that the imputed values are generated within the regions established by the observed data and are very close to the missing values. 

Regarding the {\ttfamily mean} methodology, which does not include additional information, this procedure imputes observations in regions where no missing data occurred, and the proportions of imputed data in each component appear to be influenced solely by the amount of observed data in each component. For plots similar to 
\figurename~\ref{fig:pairsplot_mult} to \figurename~\ref{fig:pairsplotpmm_mult}, illustrating the results of the {\ttfamily mean} procedure, please refer to the \href{JMVA_Supplementar.pdf}{Supplementary Material, Section} \ref{A-simulation_other-methods_multivariate}. In the next section, a quantitative procedure to determine the capacity to approximate the fully observed data will be established based on the Kullback-Liebler divergence \citep{kullback1951information}. The use of the divergence will enable us to evaluate and compare the various methodologies implemented for the imputation process, as well as to diagnose the type of incoming information.

\subsection{Quantitative diagnosis of imputation processes}
\label{KL-simulation}

To assess the accuracy of the imputation process, we employed the Kullback-Liebler divergence \citep{kullback1951information}. This measure quantifies the difference between two probability functions and is non-symmetric in nature. For multivariate continuous data, the computation of the Kullback-Liebler divergence involves integrating over the domain of the variables, as shown below:

\begin{equation}
\text{KL}(f,g):=\int_{\mathbb{R}^{d+p}}f(\bm{w})\ln\frac{f(\bm{w})}{g(\bm{w})}d\bm{w}.
\label{ecuac:div_KL}
\end{equation}

In the context provided, the density functions $f(\cdot)$ and $g(\cdot)$ represent two probability distributions, defined on $\mathbb{R}^{d+p}$. The Kullback-Liebler divergence, denoted as $\text{KL}(f,g)$, quantifies the loss of information that occurs when we approximate the distribution $f$ using the distribution $g$. Nevertheless, when dealing with Gaussian FMM distributions for $f(\cdot)$ and $g(\cdot)$, the expression in Equation (\ref{ecuac:div_KL}) becomes intractable. To address this challenge, researchers such as \citet{hershey2007approximating} and \citet{durrieu2012lower} have proposed various approaches that employ approximations and bounds for the divergence. The purpose of utilizing the Kullback-Leibler divergence is to evaluate the performance of the imputation methods by comparing the estimated distributions derived from the imputed databases with the original distribution. This enables us to evaluate the quality of the imputations performed using the different methods.

The process of calculating this measure begins by imputing the dataset using various methods. Subsequently, we estimate the parameters that describe the data distribution, specifically focusing on the output variables. Since there is no closed-form expression available for computing the KL divergence in the case of Gaussian FMM, we employ an approximation approach utilizing Monte Carlo methods \citep{hershey2007approximating}. To denote this approximation, we will use the notation $\text{KL}_{\text{mc}}$. The results obtained with the different imputation methods are presented in Table \ref{Cuadro:KL_bases_imputadas_multivariada}. The table's top row presents a 95\% quantile interval for the KL divergence values. These values are calculated based on $N = 10000$ sampled datasets, each with a size of $n = 1000$, representing the true distribution of the output variables $(Y_1, Y_2)$. The parameters of these datasets are estimated using the \texttt{mixsmsn} package \citep{prates2013mixsmsn}, and the KL divergence is computed for each case. If the KL divergence value falls within the quantile interval, it is considered as a successful recovery of the original distribution, denoted as \texttt{WI} (Within Interval) in the relative distance column. The values in this column are determined by measuring the relative distance between the KL divergence value and the upper limit of the quantile interval.

\begin{table}[ht]
\resizebox{15.0cm}{!} {
\begin{tabular}{llcclcclcc}
\cline{3-4} \cline{6-7} \cline{9-10}
&  & \multicolumn{2}{c}{\ttfamily cwm} &  & \multicolumn{2}{c}{\ttfamily pmm}  &  & \multicolumn{2}{c}{\ttfamily norm} \\ 
\cline{3-4} \cline{6-7} \cline{9-10} 
&  & $\text{KL}_{\text{mc}}$ & Relative distance &  & $\text{KL}_{\text{mc}}$ & Relative distance &  & $\text{KL}_{\text{mc}}$ & Relative distance \\ 
\cline{1-1} \cline{3-4} \cline{6-7} \cline{9-10} 
Qu.int. 95.0\% &  & (0,0.0107) & -  &  & (0,0.0107) & -  &  & (0,0.0107) & - \\ 
\cline{1-1} \cline{3-4} \cline{6-7} \cline{9-10} 
$\phantom{-}\bm{g}_{(Y_1,Y_2)_{\text{com}}}$ &  & 0.0081 & \phantom{-}{\ttfamily WI} &  & 0.0081  & \phantom{-}{\ttfamily WI} &  & 0.0081 & \phantom{-}{\ttfamily WI}  \\
$\phantom{-}\bm{g}_{(Y_1,Y_2)_{\text{obs}}}$ &  & 0.0358 & \phantom{-}3.36 &  & 0.0358 & \phantom{-}3.36 &  & 0.0358 & \phantom{-}3.36 \\
$\phantom{-}\bm{g}_{(Y_1,Y_2)_{\text{mean}}}$ &  & 0.3140 & 29.46 &  & - & - &  & - & - \\
$\phantom{-}\bm{g}_{(Y_1,Y_2)_{X_1}}$ &  & 0.0342 & \phantom{-}3.22 &  & 0.3092 & 29.00 &  & 0.3114 & 29.21 \\
$\phantom{-}\bm{g}_{(Y_1,Y_2)_{X_2}}$ &  & 0.0149 & \phantom{-}1.40 &  & 0.0181 & \phantom{-}1.75 &  & 0.0620 & \phantom{-}5.81 \\
$\phantom{-}\bm{g}_{(Y_1,Y_2)_{(X_1, X_2)}}$ &  & 0.0112 & \phantom{-}1.05 &  & 0.0247 & \phantom{-}2.31 &  & 0.0505 & \phantom{-}4.74 \\ 
\cline{1-1} \cline{3-4} \cline{6-7} \cline{9-10} 
\end{tabular}
}
\caption[KL divergence and relative distance for imputed databases using different methodologies.]{The KL divergence and relative distance are presented for the complete dataset, observed data, and imputed databases using different types of input variables. This analysis is performed for each of the methodologies being compared. Additionally, information for the {\ttfamily mean} procedure is included.}
\label{Cuadro:KL_bases_imputadas_multivariada}
\end{table}

The results in the {\ttfamily cwm} column of Table \ref{Cuadro:KL_bases_imputadas_multivariada} provide quantitative confirmation of the analyses conducted in Section \ref{subsec:LCWM_X1_X2_X1X2_mult}. The distributions that better recover the original distribution are those that were imputed using the information from the variable $X_2$ and the vector $(X_1,X_2)$. These distributions utilize information from variables or vectors that are distributed separately among components. On the other hand, the imputation process using only the variable $X_1$ performs poorly. However, constructing an input vector that includes at least one variable that are distributed separately among components enables the generation of a vector with a distinct distribution separately among components, resulting in a more precise imputation. This was observed when constructing an input vector with the variables $X_1$ and $X_2$ using the various imputation processes.

Section \ref{subsec:LCWM_meanpmmnorm_mult} aimed to compare various imputation methods with the Gaussian LCWM. An initial comparison was conducted using the KL divergence when comparing our model with the imputation method that does not utilize auxiliary information, referred to as the {\ttfamily mean} imputation method. We expected our model to perform better due to the use of additional information. The columns labeled {\ttfamily pmm} and {\ttfamily norm} in Table \ref{Cuadro:KL_bases_imputadas_multivariada} present the KL divergence values for similar cases implemented in our model, involving imputation of the output variables $(Y_1, Y_2)$ with information from $X_1$, $X_2$, and $(X_1, X_2)$. Similar results to those observed for the {\ttfamily cwm} method were obtained within each methodology. Comparing the three procedures, our model exhibited superior performance in all three specific situations examined. Even when using auxiliary information from the variable $X_1$, the three procedures displayed inferior performance, but our imputation model exhibited the least information loss. Furthermore, in scenarios involving $X_2$, we observed a relative improvement of at least 20\% and 76\% from {\ttfamily cwm} to the {\ttfamily ppm} and {\ttfamily norm}, respectively. 

\section{Illustrative example: Fisher’s Iris Data}
\label{Ejemplo}

The \textsc{Fisher’s Iris Data} is a multivariate dataset collected by \citet{anderson1935irises} and first analyzed by \citet{fisher1936use}. The dataset comprises 50 samples from each of three species of Iris: Iris setosa, Iris virginica, and Iris versicolor. For each sample, four features were measured: the sepal length, sepal width, petal length, and petal width, all measured in centimeters. In the context of finite mixture distributions, the IRIS database is extensively utilized as it offers an outstanding platform for experimentation and method development in this field. Authors such as \citet{everitt1981finite}, \citet{mclachlan2019finite}, \citet{fruhwirth2006finite}, among others, utilize this database to illustrate various procedures presented in their respective investigations.

The database variables were divided into two groups. The first group consisted of the output variables, namely {\ttfamily Sepal.Length} and {\ttfamily Petal.Width}, for which the missing data pattern was simulated. The second group included the input variables, {\ttfamily Sepal.Width} and {\ttfamily Petal.Length}, which were assumed to be fully observed. In the dataset, each of the three species was treated as a separate cluster. To evaluate the imputation processes, we made the assumption that the true number of components in the complete dataset, denoted as $C$, was equal to three, which also coincided with the number of species.

The distribution of variables in the \textsc{Fisher’s Iris Data} is shown in \figurename~\ref{fig:pairwise_iris_species}. The behavior of the input variables resembles the scenarios depicted in the simulated data. The variable {\ttfamily Petal.Length} displays a distribution separated into two groups. There is a clearly defined group corresponding to the setosa species, while the versicolor and virginica species form a single cluster, although they occupy well-defined regions. On the other hand, the variable {\ttfamily Sepal.Width} does not provide clear information on which of the components to impute from. Since the missing values will be imputed using information from both variables, the input vector inherits the feature of having a separate distribution between components from the {\ttfamily Petal.Length} variable. Two scenarios were simulated for missing data: one utilizing a MAR mechanism, and the other employing a MNAR mechanism. The databases were imputed using our model as well as the three methodologies discussed throughout the document.

\begin{figure}[htb]
    \centering
	\includegraphics[width=0.9\textwidth]{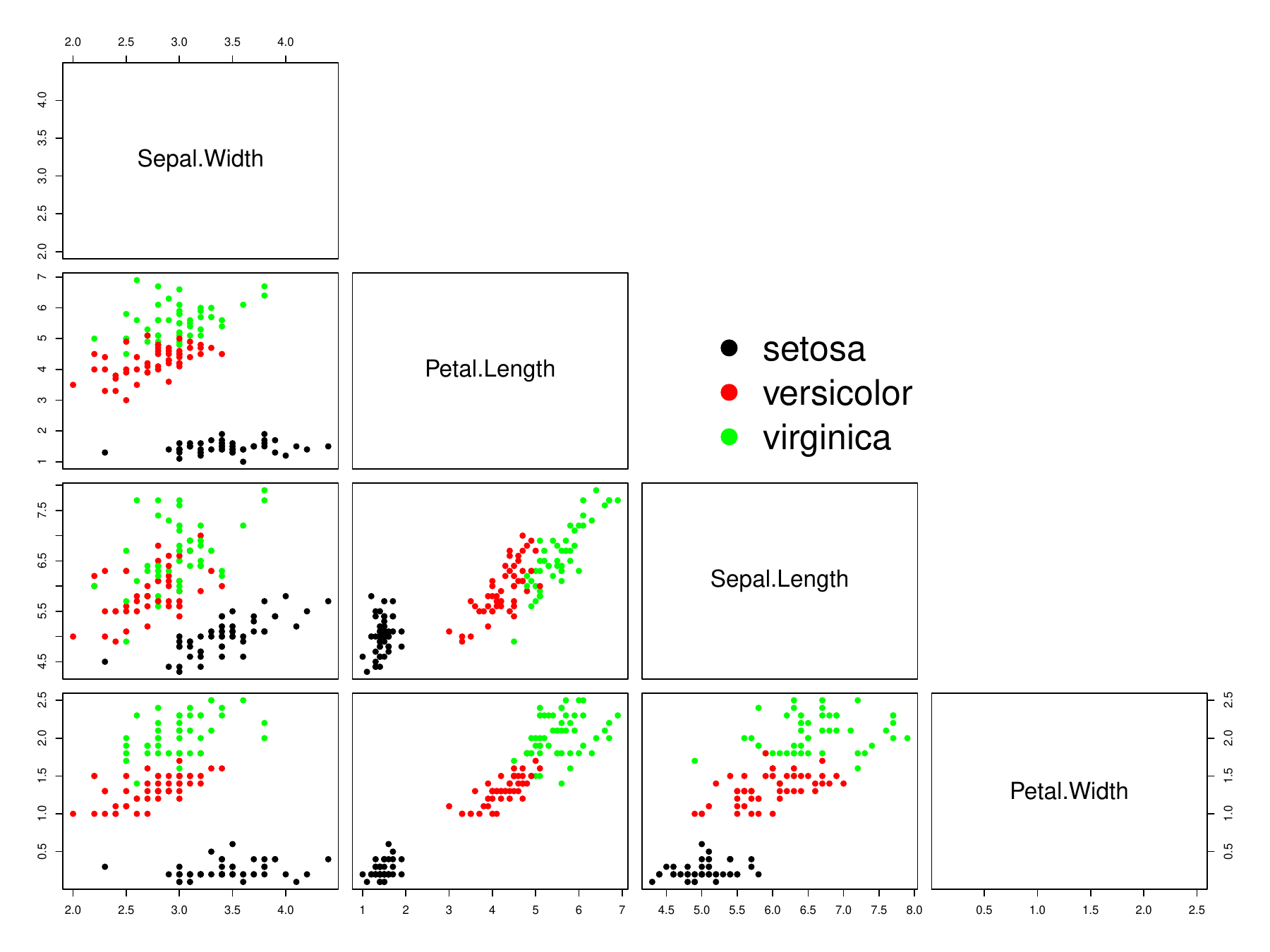}
	\caption[Pairwise plots of the variables in the {\ttfamily iris} database discriminated by species]{Pairwise plot of the variables in the {\ttfamily iris} database discriminated by species. At the top left and bottom right, scatter plots of the input variables and output variables are shown respectively. In the lower left corner, the crosses of these variables are displayed.}
	\label{fig:pairwise_iris_species}
\end{figure}

\subsection{Simulation of missing data under MAR mechanism}
\label{MAR_IRIS}

To generate a dataset with missing values under the MAR mechanism, distinct probabilities of missingness were assigned to each species. Specifically, 30\% of missing data was simulated for the setosa species, 20\% for the versicolor species, and 10\% for the virginica species. By establishing these proportions, an overall missing data rate of 20\% is expected in the complete database. The distribution of observed, missing, and complete data categorized by species is presented in Table \ref{cuadro:MAR_iris_faltantes}.

\begin{table}[htb]
\centering
\begin{tabular}{c|cl|cl|cl}
\multicolumn{1}{l|}{} & \multicolumn{2}{c|}{\textbf{observed}}  & \multicolumn{2}{c|}{\textbf{missing}}  & \multicolumn{2}{c}{\textbf{complete}}  \\ 
\hline
\multirow{1}{*}{\textbf{setosa}} & 36 & \textit{(72.0\%)} & 14 & \textit{(28.0\%)} & \phantom{-}50 & \textit{(100\%)} \\
& \multicolumn{1}{l}{\textit{(30.5\%)}} &   & \multicolumn{1}{l}{\textit{(43.8\%)}} &                   & \multicolumn{1}{l}{\textit{(33.3\%)}} &\\ 
\hline
\multirow{1}{*}{\textbf{versicolor}}  & 37 & \textit{(74.0\%)} & 13 & \textit{(26.0\%)} & \phantom{-}50 & \textit{(100\%)} \\
& \multicolumn{1}{l}{\textit{(31.4\%)}} &  & \multicolumn{1}{l}{\textit{(40.6\%)}} &  & \multicolumn{1}{l}{\textit{(33.3\%)}} & \\ 
\hline
\multirow{1}{*}{\textbf{virginica}}  & 45 & \textit{(90.0\%)} & \phantom{-}5 & \textit{(10.0\%)} & \phantom{-}50 & \textit{(100\%)} \\
& \multicolumn{1}{l}{\textit{(38.1\%)}} &  & \multicolumn{1}{l}{\textit{(15.6\%)}} &  & \multicolumn{1}{l}{\textit{(33.3\%)}} & \\ 
\hline
\multirow{1}{*}{\textbf{total}}   & 118 & \textit{(78.7\%)} & 32 & \textit{(21.3\%)} & 150                                  & \textit{(100\%)} \\
& \multicolumn{1}{l}{\textit{(100\%)}}  &                   & \multicolumn{1}{l}{\textit{(100\%)}}  &                   & \multicolumn{1}{l}{\textit{(100\%)}}  &                 
		\end{tabular}
\caption{Distribution of simulated missing data for the {\ttfamily iris} database under the MAR mechanism.}
\label{cuadro:MAR_iris_faltantes}
\end{table}

The database underwent imputation using the selected methods. Specifically, the variables {\ttfamily Sepal.Length} and {\ttfamily Petal.Width} were imputed by incorporating the available information from these variables, as well as the fully observed variables {\ttfamily Sepal.Width} and {\ttfamily Petal.Length}. The code developed by us to implement the Gaussian LCWM was also used for the {\ttfamily mean} and {\ttfamily norm} imputation procedures. In this context, the {\ttfamily mean} method is obtained by not considering any input variables, meaning there is no additional information. Consequently, the conditional model only considers the intercept as a parameter for each cluster. Regarding the {\ttfamily norm} procedure, we can implement it by assuming a single cluster, thus setting the value of $G=1$ in our code. The {\ttfamily pmm} procedure was implemented using the \texttt{MICE} package \citep{mice2011}. For our {\ttfamily cwm} model, we set the number of clusters to ten ($G=10$). 

We present the pairwise plot for the variables in the imputed database using the Gaussian LCWM, as shown in \figurename~\ref{fig:pairwise_irisMAR_impute_cwm}. Each panel displays the observed, missing, and imputed values. The imputed data effectively capture the regions where missing data occur. The upper left panel shows a scatter plot for the input variables, where the red points are superimposed on the black points since this information is considered known in the imputation procedure. In the lower right panel, the output variables with missing information are plotted, revealing the imputations made by the model. The four panels in the lower left corner present the cross between the input and output variables, illustrating the imputation process.

\begin{figure}[htb]
    \centering
	\includegraphics[width=0.9\textwidth]{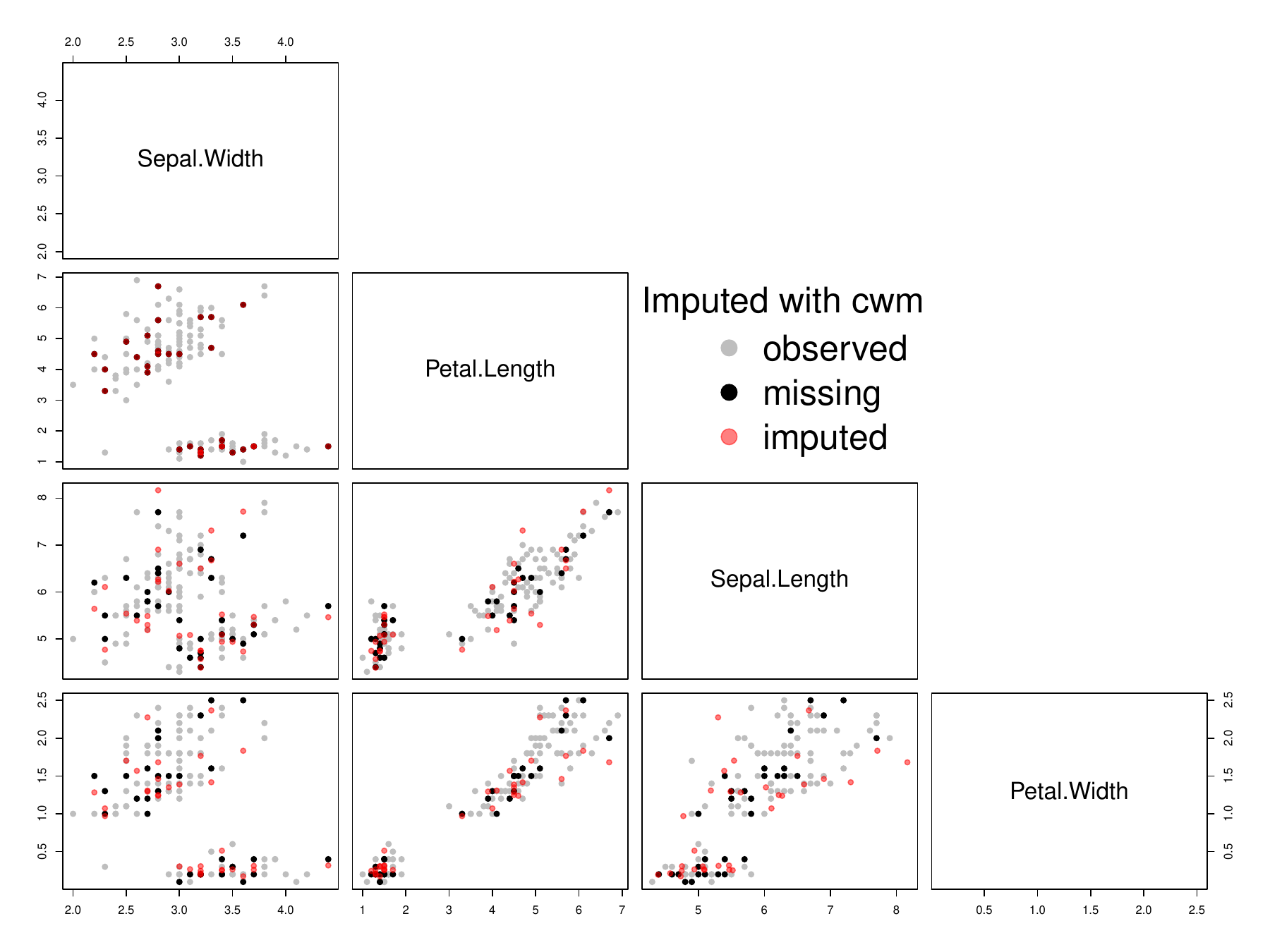}
	\caption[Pairwise plot of the imputed {\ttfamily iris} database using Gaussian LCWM. MAR mechanism.]{Pairwise plot of the imputed {\ttfamily iris} database using the Gaussian LCWM. Each panel presents the crossing of two of the variables specifying observed values, missing values and imputed values. Missing data generated using a MAR mechanism.}
	\label{fig:pairwise_irisMAR_impute_cwm}
\end{figure}

Pairwise plots of the same type for each imputation method are displayed in the \href{Supplementar.pdf}{Supplementary Material, Subsection} \ref{A-apend_c_sec_irisMAR}. These graphs reveal that the {\ttfamily pmm} procedure yields results that are highly similar to our methodology. The imputed data effectively cover the regions where missing data were generated, exhibiting comparable proportions. However, while the {\ttfamily norm} method is the next best-performing procedure, it does exhibit imputations outside the region of observed data. Similarly, the {\ttfamily mean} method, to a greater extent, demonstrates poor performance, especially considering that this method does not utilize input variables.

A quantitative analysis is conducted using the KL divergence values presented in Table \ref{cuadro:KL_irisMAR}. As a baseline, we employed a Gaussian FMM with a fixed number of components, $C = 3$, using the complete database. In contrast, for the imputation procedures, we utilized databases generated by each of the imputed methods of interest. The estimation procedure was executed with a {\ttfamily burn-in = 10000} and an {\ttfamily effectiveSize = 1000}. With values of the estimates of the mixture probabilities, mean vectors, and variance-covariance matrices, and employing a procedure based on Monte Carlo methods, we obtained the KL divergence values.

\begin{table}[htb]
\centering
\begin{tabular}{llcc}
\cline{3-4}
\multicolumn{1}{l}{}     &  & \multicolumn{2}{c}{\textbf{Approach method}}\\ 
\cline{3-4} 
\multicolumn{1}{l}{}     &  & \multicolumn{1}{l}{$\text{KL}_{\text{mc}}$} & \multicolumn{1}{l}{Relative distance}\\ 
\cline{1-1} \cline{3-4}  
$\bm{\text{{\ttfamily (PW,SL)}}}_{\text{com}}$   &  & -      & -      \\
$\bm{\text{{\ttfamily (PW,SL)}}}_{\text{obs}}$   &  & 0.0431 & 1.00   \\
$\bm{\text{{\ttfamily (PW,SL)}}}_{\text{{\ttfamily mean}}}$  &  & 0.1280 & 2.97   \\	
$\bm{\text{{\ttfamily (PW,SL)}}}_{\text{{\ttfamily cwm}}}$   &  & 0.0252 & 0.58   \\
$\bm{\text{{\ttfamily (PW,SL)}}}_{\text{{\ttfamily pmm}}}$   &  & 0.0396 & 0.92   \\
$\bm{\text{{\ttfamily (PW,SL)}}}_{\text{{\ttfamily norm}}}$  &  & 0.0429 & 1.00   \\
\cline{1-1} \cline{3-4}  
\end{tabular}
\caption[KL divergences for the imputed {\ttfamily iris} database. Missing data generated from a MAR mechanism.]{KL divergences for the imputed {\ttfamily iris} database. Relative distances taken with reference to the estimated distribution of observed data. Missing data generated from a MAR mechanism.}
\label{cuadro:KL_irisMAR}
\end{table}

In this case, the KL divergence between the observed data and the complete data serves as our reference. We consider one unit as the relative distance, meaning that imputed datasets with relative distances less than one indicate a close approximation to the complete database. According to this criterion, the results presented in Table \ref{cuadro:KL_irisMAR} demonstrate that the imputation process without auxiliary information, using the {\ttfamily mean} method, performed the worst, as anticipated. Among the methods that utilized auxiliary information, the {\ttfamily cwm} method showed the best performance, followed by the {\ttfamily pmm} method. The {\ttfamily norm} procedure maintained a similar level of information loss compared to the distribution consisting solely of observed data.

\subsection{Simulation of missing data under MNAR mechanism}
\label{MNAR_IRIS}

A second scenario simulated for the missing pattern was based on a MNAR mechanism. The mechanism is generated in such a way that larger values of the output variables \texttt {Sepal.Length} and \texttt{Petal.Width} are more likely to be missing.
An indicator variable $R_i \sim \text{Bern}(\theta_i)$ was simulated with missingness probability $\theta_i$ for the $i$-th observation. This probability is obtained from the expression $\theta_i=\text{logit}^{-1}(\beta_0+\beta_1 y_i)$ for $i \in \{1.\ldots,n \}$ with $\beta_0=-20.4$ and $\beta_1=3.0$. The distribution of missing data within the database is summarized in Table \ref{cuadro:MNAR_iris_faltantes} and establishes the amount of observed, missing, and complete data for each species. Due to the fact that the highest values of the output variables are typically observed in the Virginica species, followed by the Versicolor species, and the lowest values in the Setosa species, the missing data is primarily concentrated in the Virginica category, with 22 instances of missing data. The Versicolor group exhibits the second highest number of missing data cases (7), while the Setosa species has the fewest missing data instances (2).

\begin{table}[htb]
\centering
\begin{tabular}{c|cl|cl|cl}
\multicolumn{1}{l|}{} & \multicolumn{2}{c|}{\textbf{observed}}  & \multicolumn{2}{c|}{\textbf{missing}}  & \multicolumn{2}{c}{\textbf{complete}}  \\ 
\hline
\multirow{1}{*}{\textbf{setosa}} & 48 & \textit{(96.0\%)} & \phantom{--}2 & \textit{\phantom{--}(4.0\%)} & \phantom{-}50 & \textit{(100\%)} \\
& \multicolumn{1}{l}{\textit{(40.3\%)}} &   & \multicolumn{1}{l}{\textit{\phantom{--}(6.4\%)}} & & \multicolumn{1}{l}{\textit{(33.3\%)}} &\\ 
\hline
\multirow{1}{*}{\textbf{versicolor}}  & 43 & \textit{(86.0\%)} & \phantom{--}7 & \textit{(14.0\%)} & \phantom{-}50 & \textit{(100\%)} \\
& \multicolumn{1}{l}{\textit{(36.1\%)}} &  & \multicolumn{1}{l}{\textit{(22.6\%)}} &  & \multicolumn{1}{l}{\textit{(33.3\%)}} & \\ 
\hline
\multirow{1}{*}{\textbf{virginica}}  & 28 & \textit{(56.0\%)} & 22 & \textit{(44.0\%)} & \phantom{-}50 & \textit{(100\%)} \\
& \multicolumn{1}{l}{\textit{(23.5\%)}} &  & \multicolumn{1}{l}{\textit{(71.0\%)}} &  & \multicolumn{1}{l}{\textit{(33.3\%)}} & \\ 
\hline
\multirow{1}{*}{\textbf{total}}   & 119 & \textit{(80.0\%)} & 31 & \textit{(20.0\%)} & 150 & \textit{(100\%)} \\
& \multicolumn{1}{l}{\textit{(100\%)}}  &   & \multicolumn{1}{l}{\textit{(100\%)}}  &  & \multicolumn{1}{l}{\textit{(100\%)}}  &                 
\end{tabular}
\caption{Distribution of simulated missing data for the {\ttfamily iris} database under the MNAR mechanism.}
\label{cuadro:MNAR_iris_faltantes}
\end{table}

As in the case of missing data under MAR in the previous section, the output variables {\ttfamily Sepal.Length} and {\ttfamily Petal.Width} are imputed with information from the fully observed variables {\ttfamily Sepal.Width} and {\ttfamily Petal.Length}. 
\figurename~\ref{fig:pairwise_irisMNAR_impute_cwm} showcases the pairwise plots of the imputed database utilizing our {\ttfamily cwm} methodology. It is evident that our model imputes data in various regions with proportions that align with the patterns of missing data generation. Specifically, regions with higher variable values exhibit more imputations, corresponding to the number of missing observations. In the region characterized by the lowest values of the variable, a pair of missing values was generated, and our model successfully imputed the same number of values. Our imputation process effectively covers all regions where missing data was generated, which does not appear to be the case with the {\ttfamily norm} and {\ttfamily pmm} methods. In a specific portion of the region occupied by the Virginica species, the data completely vanished, leading to the absence of imputed data in that area for both procedures. According to \citet{van2018flexible}, although the {\ttfamily pmm} procedure is generally robust, this particular situation can pose a problem as it may result in the {\ttfamily pmm} method imputing the same observations repeatedly. The pairwise plots for the remaining imputation procedures can be found in the \href{Supplementar.pdf}{Supplementary Material, Subsection} \ref{A-apend_c_sec_irisMNAR}.

\begin{figure}[htb]
    \centering
	\includegraphics[width=0.9\textwidth]{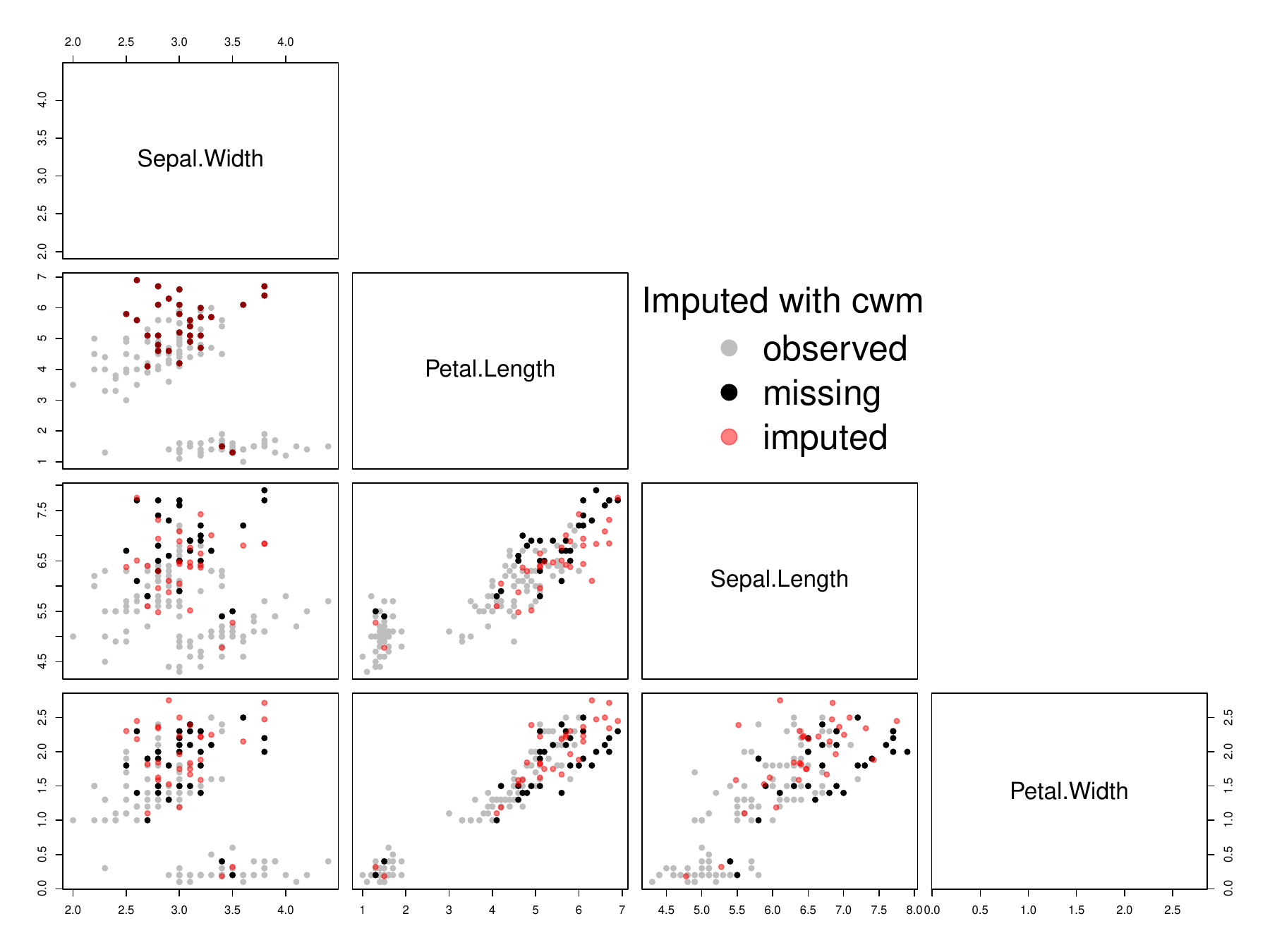}
	\caption[Pairwise plot of the imputed {\ttfamily iris} database using Gaussian LCWM. MNAR mechanism.]{Pairwise plot of the imputed {\ttfamily iris} database using the Gaussian LCWM. Each panel presents the crossing of two of the variables specifying observed values, missing values and imputed values. Missing data generated using a MNAR mechanism.}
	\label{fig:pairwise_irisMNAR_impute_cwm}
\end{figure}

Once again, the analyses conducted using the graphs can be further supported by the KL divergence values presented in Table \ref{cuadro:KL_irisMNAR}. It is noteworthy that among all the procedures examined, only the {\ttfamily cwm} approach demonstrated a satisfactory performance. Specifically, the KL divergence value for the imputation methodology {\ttfamily pmm} closely resembled that of the observed data. On the other hand, the {\ttfamily mean} and {\ttfamily norm} procedures exhibited the poorest performance. \\

\begin{table}[htb]
\centering
\begin{tabular}{llcc}
\cline{3-4}
\multicolumn{1}{l}{}     &  & \multicolumn{2}{c}{\textbf{Approach method}}\\ 
\cline{3-4} 
\multicolumn{1}{l}{}     &  & \multicolumn{1}{l}{$\text{KL}_{\text{mc}}$} & \multicolumn{1}{l}{Relative distance}\\ 
\cline{1-1} \cline{3-4}  
$\bm{\text{{\ttfamily (PW,SL)}}}_{\text{com}}$   &  & -      & -      \\
$\bm{\text{{\ttfamily (PW,SL)}}}_{\text{obs}}$   &  & 0.1405 & 1.00   \\
$\bm{\text{{\ttfamily (PW,SL)}}}_{\text{{\ttfamily mean}}}$  &  & 0.2061 & 1.46   \\	
$\bm{\text{{\ttfamily (PW,SL)}}}_{\text{{\ttfamily cwm}}}$   &  & 0.0762 & 0.54   \\
$\bm{\text{{\ttfamily (PW,SL)}}}_{\text{{\ttfamily pmm}}}$   &  & 0.1410 & 1.00   \\
$\bm{\text{{\ttfamily (PW,SL)}}}_{\text{{\ttfamily norm}}}$  &  & 0.2186 & 1.56   \\
\cline{1-1} \cline{3-4}  
\end{tabular}
\caption[KL divergences for the imputed {\ttfamily iris} database. Missing data generated from a MNAR mechanism.]{KL divergences for the imputed {\ttfamily iris} database. Relative distances taken with reference to the estimated distribution of observed data. Missing data generated from a MNAR mechanism.}
\label{cuadro:KL_irisMNAR}
\end{table}

In conclusion, the \texttt{cwm} was evaluated using two simulated missing data patterns in the iris database. The first pattern was generated under a MAR mechanism, while the second followed a MNAR mechanism. In both scenarios, our imputation model exhibited superior performance compared to other imputation methods.
Initially, it outperformed a methodology that did not utilize auxiliary information. Subsequently, when compared to other relevant methods that also incorporated auxiliary information, our model performed comparably. It is worth noting that our model does not explicitly account for non-ignorable response mechanisms, despite the missing data being simulated from MAR and MNAR mechanisms. Its good performance in the face of missing data under a MNAR mechanism characterizes it as a robust method.

\section{Conclusions}
\label{Conclusiones}

We introduced a novel methodology known as Multivariate Gaussian Linear Cluster-Weighted Modeling for imputing continuous multivariate data. This methodology is particularly suitable in scenarios where the dataset exhibits a group structure, where the objective is to uncover such structure in the data, or when the distribution shapes of the data are unknown.
It builds upon the results of finite mixture model theory \citep{mclachlan2019finite, fruhwirth2006finite} and Weighted-Cluster modeling \citep{gershenfeld1997nonlinear,ingrassia2012local} with a specific focus on Gaussian distributions. We use a fully Bayesian approach that jointly models and imputes missing values using a flexible combination of Dirichlet processes of multivariate normal distributions. The imputation model is designed based on a non-response unit pattern where variables with missing information are called output variables. For all individuals, we assume that it is possible to find auxiliary information from other sources and these fully observed variables are called input variables.
By assuming the input information as observational, we develop a joint modeling approach for both input and output variables. This approach enables the model to dynamically utilize the input information to characterize the components and determine the appropriate component for imputation.

Based on the original model that did not consider additional information, we developed a model to include this information in order to improve the imputation procedure.
This improvement is achieved by selecting input variables that exhibit a strong correlation with the output variables and are characterized by being \textit{distributed separately between components}.
We compare the performance of the proposed model with other Bayesian procedures in the literature that use additional information to obtain the required imputations. The procedures were the \textit{predictive mean matching} and the \textit{Bayesian multiple imputation}. The simulated scenarios showed a better performance of our model in the case of databases with group structure.
In the different scenarios simulated by us, the \textit{predictive mean matching} was the method that showed results closer to our proposal. However, when we simulated an extreme scenario, where the missing data pattern considered censored data and with a group structure, we could observe the poor performance of this method compared to ours. In this regard, \citet{van2018flexible} states that the danger when using {\ttfamily pmm} is the duplication of the same donor value many times. Also, this problem is more likely to occur if the sample is small or when a considerable amount of data is missing relative to the observed data within a specific region of the predicted value.

Theoretical results for the multivariate model were established, generalizing the results presented by \citet{gershenfeld1997nonlinear} and \citet{ingrassia2012local} for the univariate case. 
The performance of different types of input variables for the imputation model was analyzed on a set of simulated data. 
These variables move between two extreme scenarios: variables that do not provide information on which component to impute from, and variables that are characterized by being distributed separately among components. Due to its performance, this last type of variable is considered desirable to be included as auxiliary information in the proposed imputation model. On the same simulated scenarios, the Gaussian LCWM was compared with the aforementioned imputation procedures, obtaining favorable results for our model. Similarly, our methodology was implemented on the Iris database. On this data set, two input variables and two output variables were considered, and two missing data patterns based on MAR and MNAR mechanisms were simulated. The three imputation procedures were implemented for comparison. Once again, our model showed a better performance compared to the other procedures in both scenarios. Our imputation procedure was implemented in \textsf{R} software, and the scripts can be downloaded from \url{https://github.com/lamasmelac/lcwm_impute}.

A potential future direction of this study is to extend the methodology to the case of finite mixture models with distributions other than the normal distribution. For example, Skew Normal or $t-$Mixture Models, among others. It would also be of interest to develop selection criteria for meaningful variables that can be used as auxiliaries, as that allows to make the best possible imputation. Another possible extension is to create a {\ttfamily cwm-pmm} hybrid method, that takes advantage of both methodologies and check if that would improve the imputation performance. Finally, in some situations, according to the researcher's knowledge, it might be of interest to allow imputation to be carried out in a particular region of the data set, determining an explicit MNAR procedure.

\section*{Acknowledgments}
\label{Agradecimientos}

This study was financed in part by the Coordenação de Aperfeiçoamento de Pessoal de Nível
Superior - Brasil (CAPES). Prates acknowledges partial funding support from Conselho Nacional
de Desenvolvimento Científico e Tecnológico (CNPq) grants 436948/2018-4 and 307547/2018-4 and Fundação de Amparo à Pesquisa do Estado de Minas Gerais (FAPEMIG) grant APQ-01837-22. Masmela-Caita acknowledges partial funding support from Universidad Distrital Francisco José de
Caldas.

\bibliography{bibliografia}

\end{document}